\documentclass[aps,pre,reprint,longbibliography,nobalancelastpage]{revtex4-2}

\usepackage{amsmath,amssymb,amsthm,amsbsy}
\usepackage{graphicx}
\usepackage{xcolor}
\usepackage{mathrsfs}
\usepackage{stmaryrd}
\usepackage{hyperref}
\usepackage{enumitem}
\usepackage{accents}
\usepackage{tikz}
\usetikzlibrary{tikzmark}

\usepackage{newtxtext}
\usepackage[slantedGreek]{newtxmath}
\renewcommand{\Delta}{\upDelta}

\usepackage[compact]{titlesec}

\definecolor{linkcolor}{HTML}{223096}
\hypersetup{colorlinks,allcolors=linkcolor}
\bibpunct{\textcolor{linkcolor}{[}}{\textcolor{linkcolor}{]}}{\textcolor{linkcolor}{,}}{n}{}{;}
\renewcommand{\eqref}[1]{\hyperref[#1]{(\ref*{#1})}}

\usetikzlibrary{decorations.pathreplacing,arrows}

\renewcommand{\vec}[1]{\boldsymbol{#1}}

\newcommand{\ic}{\mathrm{i}}

\DeclareMathAlphabet{\mathbfsf}{\encodingdefault}{\sfdefault}{b}{n}
\DeclareMathAlphabet{\mathbfsfit}{\encodingdefault}{\sfdefault}{b}{it}
\newcommand{\tens}[1]{\mathbfsfit{#1}}
\newcommand{\tzero}{\mathbfsf{0}}

\newcommand{\figref}[2]{[Fig.~\hyperref[#1]{\ref*{#1}(#2)}]}
\newcommand{\figrefi}[2]{[Fig.~\hyperref[#1]{\ref*{#1}(#2)}, inset]}
\newcommand{\textfigref}[2]{Fig.~\hyperref[#1]{\ref*{#1}(#2)}}
\newcommand{\textfigureref}[2]{Figure~\hyperref[#1]{\ref*{#1}(#2)}}
\newcommand{\wholefigref}[1]{(Fig.~\ref{#1})}

\newcommand{\textwholefigref}[1]{Fig.~\ref{#1}}

\newcommand{\figrefp}[2]{\hyperref[#1]{\ref*{#1}(#2)}}

\renewcommand{\geq}{\geqslant}
\DeclareMathOperator{\tr}{tr}

\DeclareMathAlphabet{\mathcal}{OMS}{cmsy}{m}{n}

\newcommand{\citeappendix}[1]{Appendix~\ref{#1}}

\newtheoremstyle{myremark}%
  {3pt}{3pt}
  {\normalfont}
  {}
  {\itshape}
  {.}
  { }
  {%
    \thmname{#1}%
    \thmnumber{\itshape\ #2}%
    \thmnote{ (#3)}%
  }

\theoremstyle{myremark}
\newtheorem{prop}{Proposition}
\newtheorem{lemma}{Lemma}
\newtheorem{obs}{Observation}

\begin{document}

\title{Travelling waves of invasion in microbial communities with phenotypic switching}
\author{Diego \surname{Manso Anda}}
\affiliation{Max Planck Institute for the Physics of Complex Systems, N\"othnitzer Stra\ss e 38, 01187 Dresden, Germany}
\affiliation{\smash{Max Planck Institute of Molecular Cell Biology and Genetics, Pfotenhauerstra\ss e 108, 01307 Dresden, Germany}}
\affiliation{Center for Systems Biology Dresden, Pfotenhauerstra\ss e 108, 01307 Dresden, Germany}
\date{\today}%
\author{Pierre A. Haas}
\email[Contact author: ]{haas@pks.mpg.de}
\affiliation{Max Planck Institute for the Physics of Complex Systems, N\"othnitzer Stra\ss e 38, 01187 Dresden, Germany}
\affiliation{\smash{Max Planck Institute of Molecular Cell Biology and Genetics, Pfotenhauerstra\ss e 108, 01307 Dresden, Germany}}
\affiliation{Center for Systems Biology Dresden, Pfotenhauerstra\ss e 108, 01307 Dresden, Germany}
\date{\today}%
\begin{abstract}
Complex microbial habitats see the spatial competition of different clonal bacterial populations that switch between different phenotypes. Here, we determine the effect of this subpopulation structure on the invasion of one species by another in a minimal model of two competing species: one species switches, both stochastically and in response to its competitor, to a persister phenotype resilient to competition. Surprisingly, our combined analytical and numerical results show that this phenotypic switching has no effect on the speed of the travelling wave by which the competitors invade the first population. Conversely, we discover that phenotypic switching can speed up the wave by which this population invades their competitors. Our results thus suggest, counterintuitively, that bacterial persistence can be an offensive, rather than defensive ecological strategy.
\end{abstract}

\maketitle

\renewcommand{\floatpagefraction}{.999}

\section{\uppercase{Introduction}}
In microbial communities, clonal bacterial populations divide into different phenotypes~\cite{smits06,avery06,dubnau06}. Despite the known ecological importance of this subpopulation structure~\cite{bolnick11,forsman16,turcotte16}, physical models of its effect on species coexistence in well-mixed populations have only emerged more recently~\cite{kussell05b,maynard19,haas20,haas22,holdrige22}. Abundant phenotypic variation is expected to be destabilising in general~\cite{haas20} from May's stability paradigm~\cite{may72,allesina15}, since it increases the effective number of species in the community. Still, different mechanisms by which phenotypic variation can increase diversity and stability have been identified~\cite{maynard19,haas20,holdrige22}. For example, phenotypic switching to a rare phenotype, such as the bacterial persister phenotype~\cite{balaban04,maisonneuve14,harms16,radzikowski17}, can stabilise coexistence~\cite{haas20,haas22}.

Complex microbial habitats are not however well-mixed. Rather, one expects spatial competition dynamics, with some bacterial species invading other species. In the simplest, one-dimensional case, with bacterial motion represented as diffusion, these dynamics can take the form of travelling waves: indeed, the Lotka--Volterra equations of population dynamics~\cite{* [] [{, Chap.~3, pp.~79--118 and Chap.~13, pp.~437--483.}] murray,hofbauer} with diffusion admit travelling-wave solutions \cite{dunbar84,lewis02,fei03,fang09,wang11} that are higher-dimensional analogues of the travelling waves of the Fisher equation~\cite{murray,vansarloos}. These travelling waves in the Lotka--Volterra system remain a very active area of research in applied mathematics~\cite{alhasanat19,tang22,wang22,ma24,cao25,chen25}.

\begin{figure}[b]
\centering\includegraphics{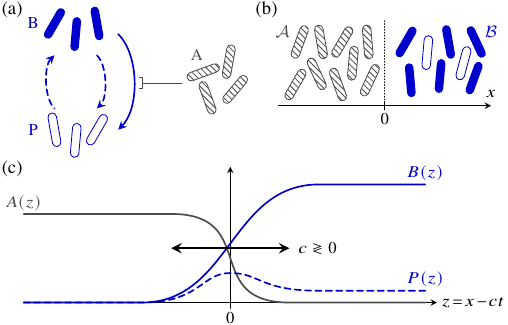}
\caption{Minimal model of the spatial competition of two species with phenotypic variation. (a)~A bacterial species has two phenotypes, a growing phenotype B and a persister phenotype P, between which it switches stochastically (dashed arrows). Additionally, the competitor species A induces responsive switching (solid arrow) from B to~P. (b)~Setup of the one-dimensional competition between these two species: at time $t=0$, the first species, in an equilibrium $\mathcal{B}$ of B and P, is in $x>0$, while the competitors, in an equilibrium $\mathcal{A}$ of A only, are in $x<0$. (c)~Dynamics of the spatial competition: after a transient, one equilibrium invades the other with a travelling wave of velocity $c$. The abundances $B(z),P(z),A(z)$ of B, P, A are fixed in the travelling-wave coordinate $z=x-ct$.}\label{fig1}
\end{figure}

Here, we ask: how do phenotypic variation and switching affect these travelling waves of invasion? To answer this question, we develop a minimal model of the spatial competition of two species~\wholefigref{fig1}: One species has two phenotypes, B and~P, between which it switches stochastically~\figref{fig1}{a}. Moreover, the competitor species A causes responsive phenotypic switching from B to P. These dynamics allow two particular types of homogeneous equilibria: equilibria $\mathcal{B}$ without competitors and equilibria $\mathcal{A}$ of competitors only. (We will not be interested in coexistence equilibria of both species.) We analyse the spatial dynamics resulting from these homogeneous equilibria meeting at $x=0$~\figref{fig1}{b}. After a transient, a travelling wave forms~\figref{fig1}{c}, with one of $\mathcal{A},\mathcal{B}$ invading the other at velocity $c$. (Again, we will not be interested in the possibility of a coexistence state invading both $\mathcal{A},\mathcal{B}$.) 

Combining analytical and numerical calculations, we determine how phenotypic switching affects the wave velocity~$c$. Contrary to the expectation that a hump of a phenotype resilient to the competitors should slow down the wave by which $\mathcal{A}$ invades $\mathcal{B}$~[$c>0$, \textfigref{fig1}{c}], we reveal that phenotypic switching does not affect $c>0$. Conversely, we find that phenotypic switching can speed up invasion of $\mathcal{A}$ by $\mathcal{B}$ [$c<0$, \textfigref{fig1}{c}]. In this way, our work suggests that phenotypic switching to a phenotype resilient to competition is, surprisingly, not a defensive, but rather an offensive ecological strategy.

\section{\uppercase{Results}}
Let $B(x,t),P(x,t),A(x,t)$ denote the abundances of populations B, P, A, where $t$ is time and $x$ is the (single) spatial coordinate. We model the dynamics of these abundances by the reaction-diffusion system
\begin{subequations}\label{eq:lv}
\begin{align}
\dfrac{\partial B}{\partial t}&=B(1 - \alpha A - B - \kappa P ) - \beta AB - \gamma B + \delta P + \frac{\partial^2 B}{\partial x^2}, \label{eq:dB}\\
\dfrac{\partial P}{\partial t}&=P(\mu - \xi A - \varpi B - \varsigma P ) + \beta AB + \gamma B - \delta P + d\frac{\partial^2 P}{\partial x^2},\label{eq:dP}\\
\dfrac{\partial A}{\partial t}&= A(\zeta - \eta A - \vartheta B - \iota P ) + D\frac{\partial^2 A}{\partial x^2},
\end{align}
\end{subequations}
in which $\alpha,\beta,\gamma,\delta,\zeta,\eta,\vartheta,\iota,\kappa,\mu,\xi,\varpi,\varsigma$ and $d,D$ are non-negative parameters. The non-spatial part of this model is the minimal model of phenotypic switching of two species introduced in Eqs.~(4) of Ref.~\cite{haas22}. Thus $\alpha,\zeta,\eta,\vartheta,\iota,\kappa,\mu,\xi,\varpi,\varsigma$ are Lotka--Volterra parameters representing the growth rates of and competition between the different populations, $\beta$ is the rate of responsive switching from B to P, and $\gamma,\delta$ are the deterministic rates of stochastic switching between B and P. The diffusive spatial terms add bacterial motility to this model, with diffusivities $d,D$.

Equations~\eqref{eq:lv} represent non-dimensional dynamics in which population abundances, time, and space have been rescaled so that $B$ has unit growth rate, carrying capacity, and diffusivity. As noted in Ref.~\cite{haas22}, too, the same dimensional rescalings must be chosen for $B$ and $P$ lest the switching terms in Eqs.~\eqref{eq:dB} and~\eqref{eq:dP} become unbalanced; choosing a different rescaling for~$A$ would allow setting, e.g., $\eta=1$, but would make comparing competition strengths more difficult.

While Eqs.~\eqref{eq:lv} provide a general model of phenotypic variation, it will turn out to be useful to focus at times on a particular biological realisation of the model, viz., bacterial persistence~\cite{balaban04,maisonneuve14,harms16,radzikowski17}. Thus B will represent the ``normal'' growing phenotype, while P will present the rare, slowly growing, and slowly moving persister phenotype resilient to the competitor~A. This corresponds to the scalings~\cite{haas22}
\begin{align}
&\alpha,\delta,\zeta,\eta,\vartheta,D=O(1),&&\beta,\gamma,\iota,\kappa,\mu,\xi,\varpi,\varsigma,d=O(\varepsilon),\label{eq:scalings}
\end{align}
in which $\varepsilon$ is a small parameter representing the relative rareness of persisters.

We conclude the setup of the model by analysing its homogeneous steady states. In particular, Eqs.~\eqref{eq:lv} have steady states
\begin{align}
&\mathcal{A}=(0,0,A_\ast),&&\mathcal{B}=(B_\ast,P_\ast,0),\label{eq:eqs}
\end{align}
of A only and B, P only, respectively, where $A_\ast=\zeta/\eta$ and $B_\ast,P_\ast$ are determined by
\begin{subequations}\label{eq:BPeq}
\begin{align}
B_\ast(1 - B_\ast - \kappa P_\ast) - \gamma B_\ast + \delta P_\ast&=0,\\
P_\ast(\mu-\varpi B_\ast-\varsigma P_\ast)+\gamma B_\ast-\delta P_\ast&=0.
\end{align}
\end{subequations}
Equations~\eqref{eq:BPeq} cannot be solved usefully in closed form, but it is easy to see that, if $\varepsilon\ll1$, then they have the unique non-trivial solution $B_\ast=1+O\bigl(\varepsilon^2\bigr)$, $P_\ast=\gamma/\delta+O\bigl(\varepsilon^2\bigr)$, which is real and positive. More generally, we prove the following in \citeappendix{appA}:
\begin{prop}\label{prop1}
There exists at least one real, positive steady state $\mathcal{B}=(B_\ast,P_\ast,0)$ of Eqs.~\eqref{eq:lv} and this steady state is unique if ${\delta+(\gamma-1)\kappa>0}$ or $\gamma\varsigma+(\delta-\mu)\varpi>0$.
\end{prop}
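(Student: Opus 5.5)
Throughout, write $P=sB$ with $s>0$, which parametrises every ray through the interior of the positive quadrant. A positive steady state $(B_\ast,P_\ast)$ then corresponds exactly to a pair $s>0$, $B>0$ satisfying both of Eqs.~\eqref{eq:BPeq}. Dividing the first equation by $B$ and the second by $B$ makes each of them linear in $B$. This gives two expressions for the radial coordinate along the ray of slope $s$:
\begin{align*}
B=B_1(s)&=\frac{h_1(s)}{1+\kappa s},&h_1(s)&=1-\gamma+\delta s,\\
B=B_2(s)&=\frac{h_2(s)}{s(\varpi+\varsigma s)},&h_2(s)&=\gamma+(\mu-\delta)s.
\end{align*}
Positive steady states are therefore in bijection with roots $s>0$ of
\begin{align*}
g(s)=h_1(s)\,s(\varpi+\varsigma s)-h_2(s)(1+\kappa s)
\end{align*}
at which $h_1(s)>0$. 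Since $s(\varpi+\varsigma s)$ and $1+\kappa s$ are positive, at any root $h_1$ and $h_2$ automatically have the same sign, so only $h_1>0$ needs to be checked. I would first treat the generic case $\gamma,\delta,\varsigma>0$ and handle the degenerate parameter values at the end, by limiting arguments or directly.

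For existence, let $s_0=\max\{0,(\gamma-1)/\delta\}$, so that $h_1>0$ on $(s_0,\infty)$. The cubic $g$ tends to $+\infty$, since its leading coefficient is $\delta\varsigma>0$. It then suffices to show $g(s_0)<0$, because the intermediate value theorem gives a root in $(s_0,\infty)$, where $B_1>0$.

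If $\gamma\le1$, then $s_0=0$ and $g(0)=-\gamma<0$. If $\gamma>1$, then $g(s_0)=-h_2(s_0)(1+\kappa s_0)$, so I need $h_2(s_0)>0$. This holds trivially if $\mu\ge\delta$. If $\mu<\delta$, $h_2(s_0)\le0$ would mean $\gamma/(\delta-\mu)\le(\gamma-1)/\delta$. That rearranges to $0\le\mu(1-\gamma)-\delta$, which is impossible when $\gamma>1$. This small case analysis is the only delicate point of the existence part.

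For uniqueness I would use the two monotonicity conditions in the statement. One computes
\begin{align*}
B_1'(s)\propto\delta+(\gamma-1)\kappa,
\end{align*}
so the first condition makes the first nullcline a strictly increasing radial graph over $s$. Symmetrically, parametrising by $t=B/P=1/s$ and solving the second equation for $P$ gives $P=(\mu-\delta+\gamma t)/(\varpi t+\varsigma)$, whose derivative is proportional to $\gamma\varsigma+(\delta-\mu)\varpi$. So the second condition makes the second nullcline a monotone radial graph in the other variable.

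The plan is to show that, under either condition, the relevant nullcline meets the other one at most once in the positive quadrant. Each equation is a conic: the first has the form $B^2+\kappa BP+\ldots$ and the second $\varsigma P^2+\varpi BP+\ldots$. I would exploit this by showing that the other curve, restricted to the admissible sector, is monotone in the opposite sense along rays, or that $g$ changes sign only once on $(s_0,\infty)$. Concretely, I would write $g=s(\varpi+\varsigma s)(1+\kappa s)\,(B_1-B_2)$ and show that $B_1-B_2$ is strictly monotone wherever $B_1,B_2>0$.

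I expect this last step to be the main obstacle: $B_2$ is not obviously monotone in $s$. I would first try the decomposition
\begin{align*}
B_2=\frac{\gamma}{s(\varpi+\varsigma s)}+\frac{\mu-\delta}{\varpi+\varsigma s}.
\end{align*}
If that fails, I would fall back on counting sign changes of the cubic $g$ on $(s_0,\infty)$ via Descartes' rule after the shift $s\mapsto s+s_0$.
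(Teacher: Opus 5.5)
\textbf{Uniqueness is left unproved.} Your existence argument is correct for positive parameters; for $\gamma>1$ one even has $h_2(s_0)=[\delta+\mu(\gamma-1)]/\delta>0$ directly. Parametrising by rays is also a legitimate alternative to the paper's route, which treats the branches of $\mathcal{H}_1,\mathcal{H}_2$ as graphs $P=\mathcal{P}_i(B)$ and applies Lemma~\ref{lemma1} in four geometric cases. The gap is uniqueness, which is the substance of the proposition: you leave it as a plan, and the step you flag does not go through as formulated. Your decomposition of $B_2$ only gives monotonicity when $\mu\geq\delta$. For $\mu<\delta$ its second term increases, and $B_2$ really does have an interior minimum on $(0,\infty)$. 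The missing observation is that monotonicity is only needed where $h_2>0$. The numerator of $B_2'$ is $(\delta-\mu)\varsigma s^2-2\gamma\varsigma s-\gamma\varpi$. This is trivially negative if $\mu\geq\delta$, and for $\mu<\delta$ the constraint $(\delta-\mu)s<\gamma$ bounds it above by $-\gamma(\varsigma s+\varpi)<0$. Together with $B_1'\propto\delta+(\gamma-1)\kappa$, this proves uniqueness under the first condition.

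\textbf{The second condition needs a separate argument.} Under the second condition alone, $\delta+(\gamma-1)\kappa<0$ makes $B_1$ decreasing as well. Your concrete step, strict monotonicity of $B_1-B_2$ wherever $B_1,B_2>0$, is then false. For example, take $\gamma=\delta=0.1$, $\kappa=\varsigma=1$, $\mu=0.2$, $\varpi=0.5$: every $s>0$ is admissible, yet $B_1'-B_2'$ is positive near $s=0$ and negative at $s=10$. You must actually run the symmetric argument in $t=1/s$, solving the \emph{first} equation for $P$ as well. This gives $P_1(t)=[\delta+(1-\gamma)t]/[t(t+\kappa)]$, whose derivative has numerator $(\gamma-1)t^2-2\delta t-\delta\kappa$. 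By the same bound, this is negative wherever $P_1>0$, while $P_2(t)$ increases under the second condition. Since $P_1-P_2=(B_1-B_2)/t$, there is at most one root.

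Descartes' rule would not close the gap by itself. Nothing you establish excludes the sign pattern $+,-,+,-$ for the shifted cubic, and that pattern only yields one or three roots.

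Completed this way, your proof is arguably cleaner than the paper's. Each condition on its own gives curves of opposite monotonicity, so you need neither the geometric case analysis nor the convexity argument the paper uses when both conditions hold. Finally, degenerate parameters cannot all be recovered by limits, because the statement itself fails for some of them. For $\varpi=\varsigma=0<\mu-\delta$ the second equation reads $(\mu-\delta)P+\gamma B=0$, which has no positive solution. Like the paper, you should simply assume the relevant parameters positive.
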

\noindent From scalings~\eqref{eq:scalings}, we expect both of these conditions to be satisfied. In what follows, we can therefore assume that both equilibria in Eqs.~\eqref{eq:eqs} exist and are unique.

\begin{figure}[b]
\centering\includegraphics{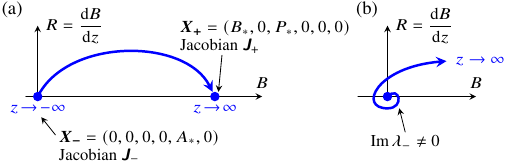}
\caption{Necessary conditions for the existence of a travelling wave. (a)~A travelling wave defines a heteroclinic connection in the space of $\vec{X}=(B,R,P,S,A,T)$: plot of a projection onto the $(B,R)$ plane, defining the steady states $\vec{X}_\pm$ and the Jacobians $\tens{J}_\pm$ at $z\to\pm\infty$. (b)~Existence of a travelling requires the stable eigenvalue of $\tens{J}_\pm$ that dominates as $z\to\pm\infty$ to be real for the solution to be nonnegative. See text for further explanation.}\label{fig2}
\end{figure}

\subsection{Exact calculations}
We seek travelling-wave solutions of Eqs.~\eqref{eq:lv}, for which purpose we recast these into six first-order ordinary differential equations in terms of the travelling-wave coordinate ${z=x-ct}$, where $c$ is the wave velocity \figref{fig1}{c}. Thus $c>0$ corresponds to invasion by the competitors, while $c<0$ is invasion of the competitors. Now
\begin{subequations}\label{eq:tweq}
\begin{align}
B'&=R,\\
R'&=-cR-B(1 - \alpha A - B - \kappa P ) + \beta AB\!+\!\gamma B\!-\!\delta P,\\
P'&=S,\\
dS'&=-cS-P(\mu - \xi A - \varpi B - \varsigma P ) - \beta AB- \gamma B + \delta P,\\
A'&=T,\\
DT'&=-cT-A(\zeta - \eta A - \vartheta B - \iota P ),
\end{align}
\end{subequations}
in which dashes denote differentiation with respect to $z$. We further introduce the solution vector $\vec{X}=(B,R,P,S,A,T)$.

As announced, we will specifically seek travelling waves in which one species invades the other, i.e., in which the equilibrium~$\mathcal{A}$ of competitors only, at $z\to-\infty$, invades ($c>0$) or is invaded by (${c<0}$) the equilibrium~$\mathcal{B}$ without competitors, at $z\to\infty$ [\textfigref{fig1}{c} and Eqs.~\eqref{eq:eqs}]. In phase space \figref{fig2}{a}, this defines a heteroclinic connection from the equilibrium $\vec{X_-}=(0,0,0,0,A_\ast,0)$ of Eqs.~\eqref{eq:tweq} to their equilibrium ${\vec{X_+}=(B_\ast,0,P_\ast,0,0,0)}$.

Determining the speed selected by travelling-wave solutions for given initial conditions in the partial differential equations~\eqref{eq:lv} is, mathematically, a problem in analysis. It is a higher-dimensional analogue of the problem of wave-speed selection in the Fisher--KPP equation~\cite{murray,vansarloos}. However, just as there is an algebraic bound $c\geq c_{\min}$ on the wave speed in the Fisher--KPP equation~\cite{murray,simpson24}, we will discover analogous algebraic constraints on travelling-wave solutions of Eqs.~\eqref{eq:lv}. We will focus on these algebraic necessary conditions, here, and complement our calculations with numerical solutions in the next subsection. The following calculations generalise results for travelling waves for two competing species without phenotypic switching that we summarise in \citeappendix{appB}.
\begin{widetext}
\subsubsection{Necessary conditions for the existence of a travelling wave}
To obtain necessary conditions for the existence of a travelling wave, we consider the dynamics for $z\to\pm\infty$. There, near their equilibria $\vec{X_\pm}$, Eqs.~\eqref{eq:tweq} linearise to $\vec{X'}=\tens{J}_\pm\cdot\vec{X}$, where the Jacobians $\tens{J}_\pm$ are given by
\begin{subequations}\label{eq:Js}
\begin{align}
\tens{J}_-&=\begin{pmatrix}
0 & 1 & 0 & 0 & 0 & 0 \\
\gamma-1+(\alpha+\beta)\dfrac{\zeta}{\eta} & -c & -\delta & 0 & 0 & 0 \\
0 & 0 & 0 & 1 & 0 & 0 \\
-\dfrac{1}{d}\left(\dfrac{\beta \zeta}{\eta} + \gamma\right) & 0 & \dfrac{1}{d}\left(\delta-\mu+\dfrac{\xi\zeta}{\eta}\right) & -\dfrac{c}{d} & 0 & 0 \\
0 & 0 & 0 & 0 & 0 & 1 \\
\dfrac{\vartheta\zeta}{D\eta} & 0 & \dfrac{\iota\zeta}{D\eta} & 0 & \dfrac{\zeta}{D} & -\dfrac{c}{D}
\end{pmatrix},\label{eq:Jminus}\\
\tens{J}_+&=\begin{pmatrix}
0 & 1 & 0 & 0 & 0 & 0 \\
\gamma-1 + 2B_\ast + \kappa P_\ast & -c & \kappa B_\ast - \delta & 0 & (\alpha+\beta)B_\ast & 0 \\
0 & 0 & 0 & 1 & 0 & 0 \\
\dfrac{\varpi P_\ast - \gamma}{d} & 0 & \dfrac{\delta-\mu + \varpi B_\ast + 2\varsigma P_\ast}{d} & -\dfrac{c}{d} & \dfrac{\xi P_\ast - \beta B_\ast}{d} & 0 \\
0 & 0 & 0 & 0 & 0 & 1 \\
0 & 0 & 0 & 0 & \dfrac{\vartheta B_\ast + \iota P_\ast-\zeta}{D} & -\dfrac{c}{D}
\end{pmatrix},\label{eq:Jplus}
\end{align}
\end{subequations}
\end{widetext}
in which $B_\ast,P_\ast$ are determined by Eqs.~\eqref{eq:BPeq}. Jacobians $\tens{J}_-$ and $\tens{J}_+$ have block structures so that the eigenvalues of each matrix are determined by one quadratic and one quartic equation each. The latter cannot be solved usefully in closed form, but we can obtain expansions for their roots for $\varepsilon\ll 1$, assuming that $c=O(1)$. In this way, we find that the eigenvalues of $\tens{J}_-$ are
\begin{subequations}\label{eq:evA}
\begin{align}
\lambda_-&=\dfrac{1}{2D}\left(-c\pm\sqrt{c^2+4D\zeta}\right),\label{eq:evA1}\\
\lambda_-&=\dfrac{1}{2}\left[-c\pm\sqrt{c^2+4\left(\dfrac{\alpha\zeta}{\eta}-1\right)}\right]+O(\varepsilon),\label{eq:evA2}\\
\lambda_-&=\dfrac{\delta}{c}+O(\varepsilon),\qquad\lambda_-=-\dfrac{c}{d}+O(1).\label{eq:evA3}
\end{align}
\end{subequations}
The eigenvalues of $\tens{J}_+$ are
\begin{subequations}\label{eq:evB}
\begin{align}
\lambda_+&=\dfrac{1}{2D}\left[-c\pm\sqrt{c^2+4D\left(\vartheta B_\ast+\iota P_\ast-\zeta\right)}\right],\label{eq:evB1}\\
\lambda_+&=\dfrac{1}{2}\left(-c\pm\sqrt{c^2+4}\right)+O(\varepsilon),\label{eq:evB2}\\
\lambda_+&=\dfrac{\delta}{c}+O(\varepsilon),\qquad\lambda_+=-\dfrac{c}{d}+O(1),\label{eq:evB3}
\end{align}
\end{subequations}
where we have used $B_\ast=1+O\bigl(\varepsilon^2\bigr)$, $P_\ast=O(\varepsilon)$ to obtain Eq.~\eqref{eq:evB2}. We note that the higher-order corrections to these eigenvalues are determined by solving linear equations at each order. This means that if the leading-order term in the expansion of an eigenvalue is real, then so is the eigenvalue itself.

Now Eqs.~\eqref{eq:tweq} define a sixth-order system of ordinary differential equations, so we may impose 6 boundary conditions. However, the travelling-wave problem is invariant under translations in $z$, so one of these must remove this freedom of translation. A necessary condition for the existence of the heteroclinic connection in \textfigref{fig2}{a} is therefore that the remaining 5 boundary conditions eliminate the growing modes as $z\to\pm\infty$. These growing modes are associated with eigenvalues $\lambda_\pm$ in the eigenspaces $\mathcal{E}(\tens{J}_\pm)$ such that $\operatorname{Re}{\lambda_\pm}\gtrless0$. An algebraically necessary condition for the existence of a travelling wave is thus
\begin{align}
N_++N_-=5,\quad\text{where }N_\pm=\left|\{\lambda_\pm\in\mathcal{E}(\tens{J}_\pm)\mid\operatorname{Re}{\lambda_\pm}\gtrless 0\}\right|.
\end{align}
From Eq.~\eqref{eq:evA}, and assuming that $\alpha\zeta/\eta-1=O(1)$, we obtain
\begin{subequations}
\begin{align}
N_-&=\left\{\begin{array}{cl}
2&\text{if $\zeta<\dfrac{\eta}{\alpha}$ and $c<0$}, \\[3mm]
3&\text{if $\zeta>\dfrac{\eta}{\alpha}$},\\[3mm]
4&\text{if $\zeta<\dfrac{\eta}{\alpha}$ and $c>0$}.
\end{array}\right.\label{eq:Nminus}
\end{align}
In particular, although Eq.~\eqref{eq:evA2} has corrections at order $O(\varepsilon)$, these do not affect the sign of $\operatorname{Re}{\lambda_-}$ if $\alpha\zeta/\eta-1=O(1)$. Hence the conditions in Eq.~\eqref{eq:Nminus} are exact.

Next, we compute $\vartheta B_\ast+\iota P_\ast-\zeta=\vartheta-\zeta+O\bigl(\varepsilon^2\bigr)$ using $B_\ast=1+\smash{O\bigl(\varepsilon^2\bigr)}$, $P_\ast=\gamma/\delta+\smash{O\bigl(\varepsilon^2\bigr)}$, so, from Eqs.~\eqref{eq:evB},
\begin{align}
N_+&=\left\{\begin{array}{cl}
2&\text{if $\zeta>\vartheta+O\bigl(\varepsilon^2\bigr)$ and $c>0$}, \\[3mm]
3&\text{if $\zeta<\vartheta+O\bigl(\varepsilon^2\bigr)$},\\[3mm]
4&\text{if $\zeta>\vartheta+O\bigl(\varepsilon^2\bigr)$ and $c<0$}.
\end{array}\right.\label{eq:nplus}
\end{align}
This implies that
\begin{align}
N_++N_-=5\quad \text{if}\left\{\begin{array}{l}
\zeta<\min{\left\{\vartheta+O\bigl(\varepsilon^2\bigr),\dfrac{\eta}{\alpha}\right\}}\text{ and }c<0\text{ or} \\[3mm]
\zeta>\max{\left\{\vartheta+O\bigl(\varepsilon^2\bigr),\dfrac{\eta}{\alpha}\right\}}\text{ and }c>0,
\end{array}\right.\label{eq:nc}
\end{align}
\end{subequations}
and $N_++N_->5$ otherwise. This shows that a travelling wave of invasion is possible in two cases: invasion by competitors ($c>0$) is possible only if the growth rate of competitors is sufficiently large, $\zeta>\max{\left\{\vartheta+O\bigl(\varepsilon^2\bigr),\eta/\alpha\right\}}$; invasion of the competitors ($c<0$) is possible only if their growth rate is sufficiently small, $\zeta<\min{\left\{\vartheta+O\bigl(\varepsilon^2\bigr),\eta/\alpha\right\}}$. 

We have actually proved slightly a slightly more general result: in Eqs.~\eqref{eq:nplus} and~\eqref{eq:nc}, $\vartheta+O\bigl(\varepsilon^2\bigr)$ can be replaced with its exact value $\vartheta B_\ast+\iota P_\ast$.

\subsubsection{Minimum wavespeed of travelling waves}
We can now obtain a lower bound, $|c|\geqslant |c_\text{min}|$, on the wavespeed of these travelling waves. The key observation is the following: In the $(B,R)$ or $(P,S)$ planes, the dynamics near $\vec{X_-}$ are dominated by the stable eigenmode of $\tens{J}_-$ that has the slowest decay as $z\to-\infty$, since the $N_-$ boundary conditions have removed the unstable eigenmodes of $\tens{J}_-$. If the eigenvalue $\lambda_-$ associated to this mode were complex, then the dynamics would describe a spiral in the $(B,R)$ or $(P,S)$ planes~\figref{fig2}{b}, which would contradict $B(z),P(z)\geq 0$. Hence $\lambda_-$ must be real. Similarly, the stable eigenmode of $\tens{J}_+$ that dominates the dynamics in the $(A,T)$ plane as $z\to\infty$ must be real lest $A(z)\geq 0$ break.

In the limit $\varepsilon\ll1$, only eigenvalues~\eqref{eq:evA2} and~\eqref{eq:evB1} can be complex. Indeed, eigenvalues~\eqref{eq:evA2}, \eqref{eq:evA3}, \eqref{eq:evB2}, and~\eqref{eq:evB3} are real to leading order and hence to all orders, since, as noted above, their corrections are determined by linear equations that cannot introduce any additional imaginary terms. 

Now, if $c>0$ (invasion by the competitors), then ${\zeta>\eta/\alpha}$ from Eq.~\eqref{eq:nc}, so eigenvalues~\eqref{eq:evA2} are real. Moreover, eigenvalues~\eqref{eq:evB1} are stable ($\operatorname{Re}{\lambda_+}<0$) because ${\zeta>\vartheta B_\ast+\iota P_\ast}$, and they are real if and only if $c^2\geqslant 4D(\zeta-\vartheta B_\ast-\iota P_\ast)$. They dominate the dynamics in the $(A,T)$ plane: the eigenvectors corresponding to eigenvalues~\eqref{eq:evB2} and~\eqref{eq:evB3} are orthogonal to this plane because of the block matrix structure of $\tens{J}_+$ exhibited by Eq.~\eqref{eq:Jplus}. This proves
\begin{align}
c_\text{min}&=2\sqrt{D(\zeta-\vartheta B_\ast-\iota P_\ast)}\nonumber\\
&=2\sqrt{D(\zeta-\vartheta)}+O\bigl(\varepsilon^2\bigr)\quad \text{if }c>0.\label{eq:cminplus}
\end{align}
Remarkably, this result is independent of the rate of responsive switching, $\beta$, at all orders, because Eqs.~\eqref{eq:BPeq}, which determine $B_\ast,P_\ast$, are independent of $\beta$, too. This result holds not only in the limit $\varepsilon\ll 1$ assumed above. In Appendix~\ref{appA}, we prove, more generally, the following:
\begin{prop}\label{prop2}
If ${\delta+(\gamma-1)\kappa>0}$ or $\gamma\varsigma+(\delta-\mu)\varpi>0$, then $c_\text{min}=2\smash{\sqrt{D(\zeta-\vartheta B_\ast-\iota P_\ast)}}$ in any travelling wave of invasion by the competitors ($c>0$).
\end{prop}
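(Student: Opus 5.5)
The idea is to repeat the argument that produced Eq.~\eqref{eq:cminplus}, but without using $\varepsilon\ll1$. We use only the block structure of $\tens{J}_+$ together with the uniqueness and positivity of $\mathcal{B}$ guaranteed by Proposition~\ref{prop1} under the stated conditions. That proposition gives a well-defined $\vec{X}_+=(B_\ast,0,P_\ast,0,0,0)$ with $B_\ast,P_\ast>0$. So the quantity $\zeta-\vartheta B_\ast-\iota P_\ast$ is unambiguous.

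In Eq.~\eqref{eq:Jplus}, the $(A,T)$ rows of $\tens{J}_+$ have zeros in every column except the fifth and sixth. The lower-right $2\times2$ block
\[
\begin{pmatrix}0&1\\ (\vartheta B_\ast+\iota P_\ast-\zeta)/D & -c/D\end{pmatrix}
\]
is therefore an invariant factor of the characteristic polynomial. More precisely, $\tens{J}_+$ is block upper triangular with respect to the splitting $(B,R,P,S)\oplus(A,T)$: the $(B,R,P,S)$ rows couple to $A$, but the $(A,T)$ rows do not couple to $(B,R,P,S)$.

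It follows that the linearised $A$-dynamics near $\vec{X}_+$ decouple exactly:
\[
DA''+cA'+(\zeta-\vartheta B_\ast-\iota P_\ast)A=0 .
\]
Better still, the full nonlinear equation for $A$ in Eqs.~\eqref{eq:tweq} is
\[
DA''+cA'+A(\zeta-\eta A-\vartheta B-\iota P)=0 .
\]
As $z\to\infty$ we have $A\to0$, $B\to B_\ast$ and $P\to P_\ast$, so this is asymptotically the linear scalar ODE above with an $o(1)$ perturbation of the coefficient.

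The proof then has two steps.

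\textbf{Step 1: sign of the block eigenvalues.} We show that $\zeta-\vartheta B_\ast-\iota P_\ast>0$ for a wave with $c>0$. This follows from the exact counting argument: with $\vartheta B_\ast+\iota P_\ast$ replaced by its exact value (as remarked after Eq.~\eqref{eq:nc}), $N_++N_-=5$ with $c>0$ requires $\zeta>\vartheta B_\ast+\iota P_\ast$. If instead $\zeta\le\vartheta B_\ast+\iota P_\ast$, the statement is read as vacuous, or the square root is read as signalling that no such wave exists. Consequently both roots $\lambda_+=\bigl[-c\pm\sqrt{c^2-4D(\zeta-\vartheta B_\ast-\iota P_\ast)}\bigr]/(2D)$ have negative real part.

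\textbf{Step 2: the roots must be real.} We argue that $A(z)>0$ and $A\to0$ as $z\to\infty$ force these roots to be real. To make this rigorous rather than heuristic, I would write $A'/A=w$, which satisfies the Riccati equation
\[
Dw'+cw+Dw^2+q(z)=0,\qquad q(z)\to q_\ast=\zeta-\vartheta B_\ast-\iota P_\ast .
\]
If $c^2<4Dq_\ast$, the quadratic $Dw^2+cw+q_\ast$ is bounded below by a positive constant. Hence $w'\le-\epsilon<0$ for large $z$, so $w\to-\infty$ and $w$ blows up in finite $z$. That means $A$ vanishes at a finite point, and since $A\ge0$ this is a zero of a nonnegative solution. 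Uniqueness for the linear ODE in $A$ (with $A=A'=0$ there) then gives $A\equiv0$, contradicting the heteroclinic connection, which has $A\to A_\ast$ as $z\to-\infty$. Equivalently, one can use a Sturm comparison with the oscillating constant-coefficient equation.

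This yields $c^2\ge4D(\zeta-\vartheta B_\ast-\iota P_\ast)$, that is, $c\ge c_{\min}=2\sqrt{D(\zeta-\vartheta B_\ast-\iota P_\ast)}$, with equality attainable at the boundary. Since $B_\ast$ and $P_\ast$ are determined by Eqs.~\eqref{eq:BPeq}, which do not contain $\beta$, the bound is $\beta$-independent.

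The main obstacle is Step 2's passage from ``complex eigenvalue'' to ``sign change''. This needs the asymptotic control $q(z)\to q_\ast$, and uniform positivity of $Dw^2+cw+q$ for large $z$, both of which follow from convergence to $\vec{X}_+$. It also needs strict positivity of $A$ away from isolated zeros, which is handled by the uniqueness argument. A secondary issue is confirming in Step 1 that the root count is exact without expansions. For this I would rely only on the decoupled $2\times2$ block, whose eigenvalues are explicit, so no $\varepsilon$-expansion of the quartic factor is needed for this bound.
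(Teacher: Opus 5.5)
\textbf{Step 2 is sound; Step 1 is not established.} Your Step~2 is more careful than the paper's ``spiral'' argument. You apply the Riccati substitution $w=A'/A$ to the full nonlinear $A$-equation and use uniqueness to exclude interior zeros of $A\geq0$. This rigorously gives $c^2\geq4D(\zeta-\vartheta B_\ast-\iota P_\ast)$ whenever the right-hand side is positive, with no appeal to eigenvectors of the block-triangular $\tens{J}_+$.

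The remark after Eq.~\eqref{eq:nc} belongs to the $\varepsilon\ll1$ calculation. There, $N_\pm$ were computed from the expansions \eqref{eq:evA2}, \eqref{eq:evA3}, \eqref{eq:evB2}, \eqref{eq:evB3} of the roots of the two quartic factors; only the threshold of the $2\times2$ block was made exact. So your claim that no expansion of the quartic factor is needed is false. The condition $N_++N_-=5$ forces $\zeta>\vartheta B_\ast+\iota P_\ast$ only once you know, for arbitrary parameters and all $c>0$, that the $4\times4$ block $\tens{A}_+$ has exactly two eigenvalues with positive real part and $\tens{A}_-$ has at least two with negative real part.

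The paper establishes this in three steps:
\begin{enumerate}
\item No real part can cross zero for $c\neq0$, using $a_{11}^+,a_{22}^+>0$ and $a_{12}^-a_{21}^->0$.
\item For large $c$ the eigenvalues scale as $-c$, $-c/d$ and $x_{1,2}/c$, where $x_{1,2}$ are the roots of $x^2-(\tr\tens{a}_\pm)x+\det\tens{a}_\pm$.
\item Crucially, $\det\tens{a}_+>0$.
\end{enumerate}
The third step is where the hypothesis $\delta+(\gamma-1)\kappa>0$ or $\gamma\varsigma+(\delta-\mu)\varpi>0$ does its work. It gives $B_\ast<\delta/\kappa$ or $P_\ast<\gamma/\varpi$, and either inequality dominates the only negative term $-\kappa\varpi B_\ast P_\ast$ in Eq.~\eqref{eq:detaplus}. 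You use the hypothesis only through Proposition~\ref{prop1}, which is not where it matters. Without $\det\tens{a}_+>0$, the count admits $c>0$ waves with $\zeta<\vartheta B_\ast+\iota P_\ast$ (compare the general conditions in Appendix~\ref{appC}). Declaring that case ``vacuous'' therefore assumes what has to be proved.

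\textbf{The $z\to-\infty$ end is not checked.} You constrain $c$ only from $z\to+\infty$. To identify $2\sqrt{D(\zeta-\vartheta B_\ast-\iota P_\ast)}$ as $c_\text{min}$, you must also show that nonnegativity of $B,P$ as $z\to-\infty$ imposes no stronger constraint. The paper argues that, under $\tr\tens{a}_->0$ and $\det\tens{a}_->0$ (which also come out of the count), the eigenvalues of $\tens{A}_-$ stay real for all $c>0$. Your remark that equality is ``attainable at the boundary'' is unsupported: neither your argument nor the paper's shows that a wave exists at $c=c_\text{min}$.
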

\noindent Similarly, in the case $c<0$ (invasion of the competitors), ${\zeta<\vartheta+O\bigl(\varepsilon^2\bigr)}$ and $\zeta<\eta/\alpha$ from Eq.~\eqref{eq:nc}. Hence eigenvalues Eq.~\eqref{eq:evB1} are real. However, eigenvalues~\eqref{eq:evA2} are stable [$\operatorname{Re}{\lambda_-}=O(1)>0$], and they dominate the dynamics in the $(B,R,P,S)$ plane, because, using the block matrix structure in Eq.~\eqref{eq:Jminus}, the eigenvectors of $\tens{J}_-$ corresponding Eqs.~\eqref{eq:evA1} are orthogonal to this plane; the one other stable eigenvalue is, from Eq.~\eqref{eq:evA3}, $\lambda_-\sim |c|/d=O\bigl(\varepsilon^{-1}\bigr)$, so its eigenmode decays faster. Hence eigenvalues~\eqref{eq:evA2} must be real, which requires $c^2\geqslant 4(1-\alpha\zeta/\eta)+O(\varepsilon)$. Thus
\begin{subequations}
\begin{align}
c_\text{min}=-2\sqrt{1-\dfrac{\alpha\zeta}{\eta}}+O(\varepsilon)\quad\text{if }c<0.
\end{align}
We will need the first correction to this leading-order result in what follows. To compute this correction, we note that, at the critical value $c=c_\text{min}$, the eigenvalues in Eqs.~\eqref{eq:evA2} merge. Hence the quartic polynomial determining the eigenvalues in Eqs.~\eqref{eq:evA2} and~\eqref{eq:evA3} has a double root, so its discriminant~\cite{algebra} vanishes. Using \textsc{Mathematica} (Wolfram, Inc.), we compute this discriminant to be an eighth-order polynomial in $c$. Seeking a solution $\smash{c=-2\sqrt{1-\alpha\zeta/\eta}+\varepsilon c_1+O\bigl(\varepsilon^2\bigr)}$, we obtain
\begin{align}
c_\text{min}=-2\sqrt{1-\dfrac{\alpha\zeta}{\eta}}\left[1-\dfrac{\beta\zeta+\gamma\eta}{\delta\eta+2(\eta-\alpha\zeta)}+O\bigl(\varepsilon^2\bigr)\right]\;\text{ if }c<0.\label{eq:cminminus}
\end{align}
\end{subequations}
\subsubsection{Averaged model: effect of phenotypic variation}
To analyse the effect of phenotypic variation on these travelling waves of invasion, we need to compare our results to results for an averaged model~\cite{haas20,haas22} without phenotypic variation, with two effective populations $\bar{B}(x,t)$, corresponding to bacteria or persisters, and $\bar{A}(x,t)$, corresponding to competitors. Their effective dynamics are
\begin{subequations}\label{eq:lvav}
\begin{align}
\dfrac{\partial\bar{B}}{\partial t}=\bar{B}\bigl(\bar{\omega}-\bar{\alpha}\bar{A}-\bar{\chi}\bar{B}\bigr)+\bar{D}_B\dfrac{\partial^2\bar{B}}{\partial x^2},\\
\dfrac{\partial\bar{A}}{\partial t}=\bar{A}\bigl(\bar{\zeta}-\bar{\eta}\bar{A}-\bar{\vartheta}\bar{B}\bigr)+\bar{D}_A\dfrac{\partial^2\bar{A}}{\partial x^2},
\end{align}
\end{subequations}
which have single-species equilibria $\bar{\mathcal{A}}=\bigl(0,\bar{A}_\ast\bigr)$, $\bar{\mathcal{B}}=\bigl(\bar{B}_\ast,0\bigr)$, with $\bar{A}_\ast=\bar{\zeta}/\bar{\eta}$, $\bar{B}_\ast=\bar{\omega}/\bar{\chi}$. The effective parameters $\bar{\omega},\bar{\alpha},\bar{\chi}$, $\bar{\zeta},\bar{\eta},\bar{\vartheta},\bar{D}_A,\bar{D}_B$ are determined by consistency conditions~\cite{haas22} imposing that the populations, births, and competitions at these equilibria match the populations, births, and competitions at the corresponding equilibria of the full model~\eqref{eq:lv}. Thus
\begin{subequations}\label{eq:ccs}
\begin{align}
\bar{B}_\ast&=B_\ast+P_\ast,&\bar{A}_\ast&=A_\ast,\label{eq:ccpop}\\
\bar{\omega}\bar{B}_\ast&=B_\ast+\mu P_\ast,&\bar{\zeta}\bar{A}_\ast&=\zeta A_\ast,
\end{align}
and
\begin{align}
\bar{\alpha}\bar{B}_\ast\bar{A}_\ast&=\alpha B_\ast A_\ast+\xi P_\ast A_\ast,\\
\bar{\vartheta}\bar{A}_\ast\bar{B}_\ast&=\vartheta A_\ast B_\ast+\iota A_\ast P_\ast,\\
\bar{\chi}\bar{B}_\ast^2&=B_\ast^2+(\kappa+\varpi)B_\ast P_\ast+\varsigma P_\ast^2,\\
\bar{\eta}\bar{A}_\ast^2&=\eta A_\ast^2.
\end{align}
Additionally, averaging the diffusivities yields
\begin{align}
\bar{D}_B\bar{B}_\ast&=B_\ast+d P_\ast,&&\bar{D}_A\bar{A}_\ast=DA_\ast.\label{eq:Dav}
\end{align}
\end{subequations}
A more refined model would note that, if bacteria move at (average) speed $v$ with persistence time $\tau$, their diffusivity scales, by dimensional analysis, like $v^2/\tau$. Averaging velocities instead would modify Eqs.~\eqref{eq:Dav}, but would still yield $\bar{D}_A=D$, though it would change $\bar{D}_B$ at order $O(\varepsilon)$. 

Given the full model~\eqref{eq:lv}, Eqs.~\eqref{eq:ccs} determine the parameters of the averaged model~\eqref{eq:lvav} uniquely. Now, as shown in \citeappendix{appB}, the velocity $c$ of travelling wave solutions of Eqs.~\eqref{eq:lvav} satisfies $|c|\geqslant |\bar{c}_\text{min}|$, where
\begin{align}
\bar{c}_\text{min}=\left\{\begin{array}{cl}
-2\sqrt{\bar{D}_B\left(\bar\omega-\dfrac{\bar\alpha\bar\zeta}{\bar\eta}\right)}&\text{if $c<0$},\\
2\sqrt{\bar D_A\left(\bar\zeta-\dfrac{\bar\vartheta\bar\omega}{\bar\chi}\right)}&\text{if $c>0$}.
\end{array}\right.\label{eq:cminav}
\end{align}
We begin by considering the case $c>0$ (invasion by the competitors). Adding Eqs.~\eqref{eq:BPeq} yields
\begin{subequations}
\begin{align}
B_\ast+\mu P_\ast=B_\ast^2+(\kappa+\varpi)B_\ast P_\ast+\varsigma P_\ast^2.
\end{align}
Using Eqs.~\eqref{eq:ccs}, we now compute
\begin{align}
\dfrac{\bar\vartheta\bar\omega}{\bar\chi}&=\dfrac{\dfrac{\vartheta B_\ast+\iota P_\ast}{B_\ast+P_\ast}\dfrac{B_\ast+\mu P_\ast}{B_\ast+P_\ast}}{\dfrac{B_\ast^2+(\kappa+\varpi)B_\ast P_\ast+\varsigma P_\ast^2}{(B_\ast+P_\ast)^2}}=\vartheta B_\ast+\iota P_\ast,
\end{align}
\end{subequations}
whence
\begin{align}
\bar{c}_\text{min}=2\sqrt{D\left(\zeta-\vartheta B_\ast-\iota P_\ast\right)}=c_\text{min}\quad\text{if }c>0,
\end{align}
from Eq.~\eqref{eq:cminplus}, i.e., the minimum wavespeed in the averaged model matches that of the full model if $c>0$. [Since the result is independent of $\bar D_B$, this conclusion is unaffected by modifications of Eqs.~\eqref{eq:Dav}.] In other words, we have shown that phenotypic switching has no effect on the speed of invasion by the competitors.

Finally, we turn to the case $c<0$ (invasion of the competitors). We now compute, using Eqs.~\eqref{eq:ccs},
\begin{align}
\bar{c}_\text{min}&=-\dfrac{2}{B_\ast+P_\ast}\sqrt{(B_\ast+d P_\ast)\left[\left(1\!-\!\dfrac{\alpha\zeta}{\eta}\right)B_\ast+\left(\mu\!-\!\dfrac{\xi\zeta}{\eta}\right)P_\ast\right]}\nonumber\\
&=-2\sqrt{1-\dfrac{\alpha\zeta}{\eta}}\left[1-\dfrac{\gamma}{\delta}{}+O\bigl(\varepsilon^2\bigr)\right]\neq c_\text{min}\quad\text{if }c<0,\label{eq:cbarminminus}
\end{align}
by comparison with Eq.~\eqref{eq:cminminus}. In other words, phenotypic switching does effect the speed of invasion of the competitors. [Again, this conclusion is not affected by modifications of Eqs.~\eqref{eq:Dav}, because $c_\text{min}$, given by Eq.~\eqref{eq:cminminus}, depends on $\beta$, while $\bar{c}_\text{min}$ cannot depend on $\beta$ because neither Eqs.~\eqref{eq:BPeq} nor Eqs.~\eqref{eq:ccs} depend on $\beta$.]

From Eqs.~\eqref{eq:cminminus} and~\eqref{eq:cbarminminus} and recalling that $\eta>\alpha\zeta$ if $c<0$, we find that $|c_\text{min}|>|\bar{c}_\text{min}|$ if $\beta<2\gamma(\eta-\alpha\zeta)/(\delta\zeta)$. In other words, phenotypic switching can speed up invasion of the competitors, but only if the rate of responsive switching is not too high. We conclude by noting that this bound might change if Eqs.~\eqref{eq:Dav} are modified.

\begin{figure}[t!]
\centering\includegraphics{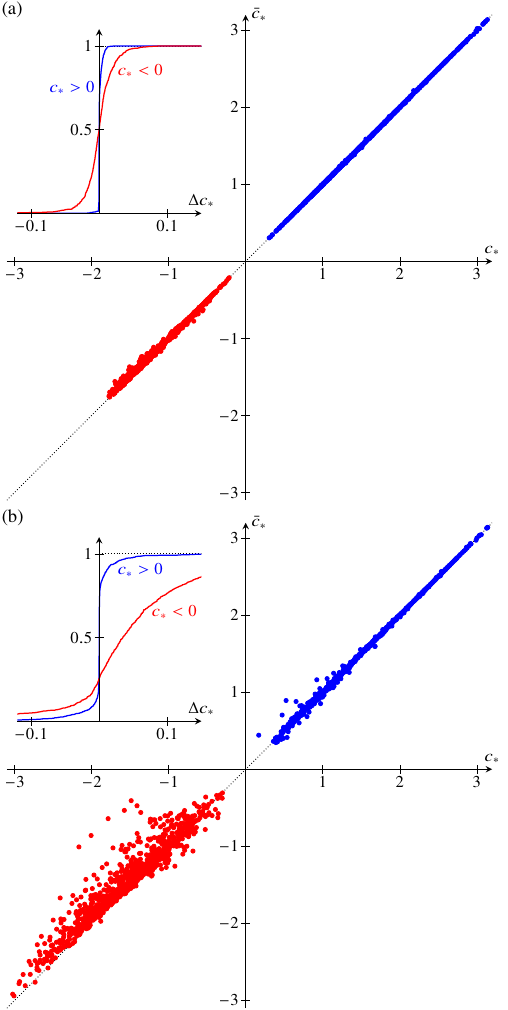}
\caption{Numerical results. (a) Plot of the velocity $\bar{c}_\ast$ of travelling waves of invasion in the averaged model~\eqref{eq:lvav} against their velocity~$c_\ast$ in the full model~\eqref{eq:lv}. Each mark corresponds to one of ${\mathcal{N}=1000}$ randomly sampled systems obeying the persister scalings~\eqref{eq:scalings} in each of the cases $c_\ast>0$ (blue, invasion by the competitors) and $c_\ast<0$ (red, invasion of the competitors). Inset: corresponding cumulative distribution functions of the relative difference $\upDelta c_\ast$ of the wave velocities. (b)~Corresponding results for general, randomly sampled systems. See text for further explanation. Parameters values for the numerical\linebreak calculations (\citeappendix{appC}): $f\!=\!2$, $k\!=\!1$, $M\!=\!100$, $T\!=\!5$, (a)~$\varepsilon\!=\!0.1$.}\label{fig3}
\end{figure}

\subsection{Numerical results}
From general initial conditions, we expect the dynamics of the full model~\eqref{eq:lv} and the averaged model~\eqref{eq:lvav} to converge to waves of invasion travelling at velocities $c_\ast$ and $\bar{c}_\ast$ respectively. In the context of the Fisher--KPP equation, shallowness of initial conditions or nonlinearities leading to pushed waves \cite{murray,vansarloos,simpson24} can cause the solution of the Fisher--KPP equation to select a wave speed higher than that of the pulled wave determined by the linearised dynamics as $z\to\pm\infty$. In the same way,  $c_\ast$ and $\bar{c}_\ast$ may be different to the linear velocities $c_\text{min}$ and $\bar{c}_\text{min}$ analysed in the previous subsection. 

Even in the two-species \mbox{Lotka--Volterra system~\cite{lewis02,fei03,alhasanat19}}, necessary and sufficient conditions for the solution to converge to these linear wavespeeds are not known. For this reason, we compute the wave velocities $c_\ast$ and $\bar{c}_\ast$ numerically, as described in \citeappendix{appC}, and we randomly sample the two species, i.e., the model parameters $\alpha, \beta,\gamma,\delta,\zeta,\eta,\vartheta$, $\iota,\kappa,\mu,\xi,\varpi,\varsigma$ and $d,D$ (\citeappendix{appC}).

We begin by sampling parameter values obeying the persister scalings~\eqref{eq:scalings}, and plot $\bar{c}_\ast$ against $c_\ast$ in \textfigref{fig3}{a}. The data cluster near the diagonal $\bar{c}_\ast=c_\ast$ of the plot, especially strongly so for $c_\ast>0$. We quantify the deviation from this diagonal by the relative difference $\upDelta c_\ast=1-\bar{c}_\ast/c_\ast$. Its cumulative distribution functions for $c_\ast\gtrless0$ \figrefi{fig3}{a} show that the distribution of $\upDelta c_\ast$ is less localised for invasion of the competitors ($c_\ast<0$) than for invasion by the competitors ($c_\ast>0$). Moreover, the distributions show that $\Delta c_\ast\gtrless 0$ and hence $|c_\ast|\gtrless|\bar c_\ast|$ are almost equally likely.

We extend these results by sampling general parameter values, not necessarily obeying scalings~\eqref{eq:scalings} or indeed the conditions of Propositions~\ref{prop1} and~\ref{prop2}. The data~\figref{fig3}{b} continue to cluster near $\bar{c}_\ast=c_\ast$ for $c_\ast>0$, but the deviation from this diagonal increases for $c_\ast<0$, in which case the distribution has more mass in $\Delta c_\ast>0$ \figrefi{fig3}{b}.

In this way, our numerical results confirm the conclusions that we have drawn from the exact calculations in the previous subsection: phenotypic switching has little if any effect on the speed of invasion by the competitors ($c_\ast>0$), but it does affect invasion of the competitors ($c_\ast<0$). In the latter case, phenotypic switching can even speed up invasion of the competitors on average ($|c_\ast|>|\bar{c}_\ast|$).

Nevertheless, it is important to note that the wave velocities to which the initial conditions converge differ in general from the minimum wave velocities: our numerical simulations suggest that $|c_\ast|>|c_\text{min}|$ \wholefigref{fig4} and $|\bar{c}_\ast|>|\bar{c}_\text{min}|$ (not shown), especially for slow waves. We have argued above that this is not surprising, but analysing the underlying mathematical mechanisms, such as initial conditions or pushed waves, is beyond the scope of this work.

\begin{figure}[t]
\centering\includegraphics{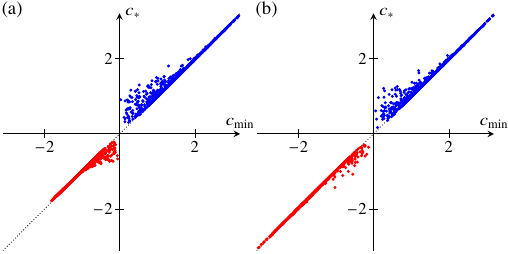}
\caption{Numerical results. Plot of the velocity $c_\ast$ of travelling waves in the full model~\eqref{eq:lv} against the minimum wave velocity $c_\text{min}$ determined by the linearised dynamics for (a) randomly sampled systems obeying the persister scalings~\eqref{eq:scalings} and (b) general randomly sampled systems, corresponding to those analysed in \textwholefigref{fig3}.}\label{fig4}
\end{figure}

\section{\uppercase{Conclusion}}
In this paper, we have studied the effect of stochastic and responsive phenotypic switching on the travelling waves by which one species invades another in a minimal two-species setting opposing one species switching between two phenotypes to a competitor species. Through exact calculations of the minimum wave speed determined by the linearised dynamics near the one-species steady states and numerical simulations of the travelling waves which the dynamics converge to, we have found that phenotypic switching does not affect the speed at which the switching species is invaded by the competitors, but does speed up the wave by which the switching species invades the competitors on average.

This has a somewhat counterintuitive ecological implication: switching to a phenotype resilient to competition is not a defensive ecological strategy, but can be an offensive ecological strategy. Still, this result is consistent with our earlier theoretical work~\cite{haas22}, in which we found that switching to a rare ``attack'' phenotype can stabilise well-mixed coexistence of two species. Moreover, recent experimental~\cite{oliveira22} work provides further support for this view by reporting a ``suicidal'' bacterial phenotype: a subpopulation of \emph{Pseudomonas aeruginosa} migrates up antibiotic gradients to release bacteriocins that kill competitors. Future work, both experimental and theoretical, will therefore have to clarify the ecological role of phenotypic switching.

We have focused on one-dimensional spatial diffusion in our work. This simplification is of course convenient mathematically, but, in the ecological context of soil, in which bacteria move through the narrow channels of the porous soil medium, not unrealistic either. Still, future work should analyse these travelling of invasion in higher spatial dimensions, too. 

Finally, a more mathematical question remains open, too: what is the mechanism of wave speed selection in these models with phenotypic switching? This is, we believe, an interesting avenue for future research because phenotypic switching provides a source of immigration for each species, with abundance $X$, say: in classical population dynamics models, $\dot{X}=0$ if $X=0$, but $\dot{X}\neq0$ is possible if $X=0$ in these models, as exemplified by our minimal two-species model~\eqref{eq:lv}.

\begin{acknowledgments}
We thank Nuno M. Oliveira for discussions and Federico Stefanelli and Kunjeti Dharanidhar Gupta for preliminary computations at an early stage of this work. We are grateful for funding from the Max Planck Society.
\end{acknowledgments}

\subsection*{Data availability}
\textsc{Matlab} (The MathWorks, Inc.)~code for the numerical calculations is available openly at Ref.~\footnote{Code is available at \href{https://doi.org/10.5281/zenodo.20072527}{\texttt{doi.org/10.5281/zenodo.20072527}}.}.

\appendix
\titleformat{\section}{\normalfont\bfseries\centering\small}{APPENDIX \thesection:}{0.5em}{}
\section{\uppercase{Proofs of propositions}}\label{appA}

\subsection{Proof of Proposition~\ref{prop1}}
Equations~\eqref{eq:BPeq} define $(B_\ast,P_\ast)$ as the intersection, in the positive quadrant of the $(B,P)$ plane, of the hyperbolae
\begin{subequations}
\begin{align}
\mathcal{H}_1&\colon B^2+\kappa BP+(\gamma-1)B-\delta P=0,\\
\mathcal{H}_2&\colon \varpi BP+\varsigma P^2-\gamma B+(\delta-\mu)P=0.
\end{align}    
\end{subequations}
We notice that $\mathcal{H}_1$ has a vertical asymptote $B=\delta/\kappa$. Hence, as $B\to\delta/\kappa^-$ or $B\to\delta/\kappa^+$,  $P\to\infty$ in the first quadrant. As $B\to\infty$, $P\sim-\kappa B\to-\infty$, so this branch must enter or leave the first quadrant by crossing $B=0$ or $P=0$. Now it is easy to see that $\mathcal{H}_1$ intersects $B=0$ and $P=0$ at $(0,0)$ and $(1-\gamma,0)$ only. Moreover, at these points, we compute $\partial P/\partial B=(\gamma-1)/\delta$ and $\partial P/\partial B=(1-\gamma)/[\delta-\kappa(1-\gamma)]$, respectively. There are thus three cases:

\begin{figure}[t]
\centering\includegraphics{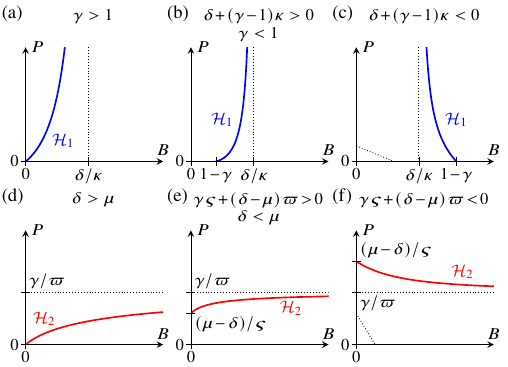}
\caption{Proof of Proposition~\ref{prop1}. Plots, in the positive quadrant of the $(B,P)$ plane, of hyperbolae $\mathcal{H}_1$ in the cases (a) $\delta+(\gamma-1)\kappa>0$ and $\gamma>1$, (b) $\delta+(\gamma-1)\kappa>0$ and $\gamma<1$, (c) $\delta+(\gamma-1)\kappa<0$ and $\mathcal{H}_2$ in the cases (d)~$\gamma\varsigma+(\delta-\mu)\varpi>0$ and $\delta>\mu$, (e) $\gamma\varsigma+(\delta-\mu)\varpi>0$ and $\delta<\mu$, (f) $\gamma\varsigma+(\delta-\mu)\varpi<0$. Dotted lines are the asymptotes to $\mathcal{H}_1$ or $\mathcal{H}_2$.}\label{fig5}
\end{figure}

\begin{enumerate}[leftmargin=*,label=(\arabic{enumi}),widest=3,topsep=4pt,itemsep=0pt]
\item If $\gamma>1$ \figref{fig5}{a}, then $\mathcal{H}_1$ must enter the first quadrant at $(0,0)$, where, indeed, $\partial P/\partial B>0$, and then approach $B=\delta/\kappa$ from below. Conversely, if $\gamma<1$, then $\mathcal{H}_1$ does not enter the first quadrant near $(0,0)$.
\item If $\gamma<1$ but $\delta+(\gamma-1)\kappa>0$ \figref{fig5}{b}, then $1-\gamma<\delta/\kappa$, and $\mathcal{H}_1$ enters the first quadrant at $(1-\gamma,0)$. Indeed, there, $\partial P/\partial B>0$, so $\mathcal{H}_1$ approaches $B=\delta/\kappa$ from below.
\item Finally, if $\delta+(\gamma-1)\kappa<0$ \figref{fig5}{c}, then $1-\gamma>\delta/\kappa$, so $\mathcal{H}_1$ approaches $B=\delta/\kappa$ from above, and leaves the first quadrant at $(1-\gamma,0)$ where $\partial P/\partial B<0$.
\end{enumerate}
An analogous argument determines the possible orientations of $\mathcal{H}_2$ [Figs.~\figrefp{fig5}{d}--\figrefp{fig5}{f}] in the positive quadrant relative to its horizontal asymptote $P=\gamma/\varpi$. These orientations depend on whether ${\gamma\varsigma+(\delta-\mu)\varpi\gtrless0}$ and $\delta\gtrless\mu$. In particular, we compute $\partial P/\partial B=\gamma/(\delta-\mu)$ at $(0,0)$.

The arcs of the hyperbolae in \textwholefigref{fig3} define smooth functions $\mathcal{P}_1(B),\mathcal{P}_2(B)$, on $[0,\delta/\kappa)$, $[1-\gamma,\delta/\kappa)$, or $(\delta/\kappa,1-\gamma]$ in the cases corresponding to Figs.~\figrefp{fig3}{a}--\figrefp{fig3}{c}, so their difference $\mathcal{P}(B)=\mathcal{P}_1(B)-\mathcal{P}_2(B)$ is smooth on this interval, too. Existence and uniqueness of $(B_\ast,P_\ast)$ is now existence and uniqueness of a root $\mathcal{P}(B_\ast)=0$. To establish this, we need the following general fact:

\begin{lemma}\label{lemma1}
Let $f\colon[a,b]\to\mathbb{R}$ be a smooth function such that $f(a)f(b)<0$.
\begin{enumerate}[label=(\alph{enumi}),widest=2,leftmargin=*,itemsep=0pt,topsep=4pt]
\item There exists $\xi\in(a,b)$ such that $f(\xi)=0$.
\item This root is unique if (i) $f'(x)\neq 0$ for all $x\in (a,b)$ or (ii)~$f(a)\gtrless 0$ and $f''(x)\lessgtr 0$ for all $x\in (a,b)$.
\end{enumerate}
\end{lemma}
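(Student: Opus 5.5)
Part (a) is the intermediate value theorem. Smoothness gives continuity of $f$ on $[a,b]$. Since $f(a)$ and $f(b)$ have opposite signs and $0$ lies strictly between them, the intermediate value theorem gives some $\xi\in(a,b)$ with $f(\xi)=0$. The endpoints are excluded because $f(a)f(b)<0$ forces $f(a),f(b)\neq 0$.

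For part (b)(i), I will argue by contradiction via Rolle's theorem. Suppose there are two distinct roots $\xi_1<\xi_2$ in $(a,b)$. Since $f$ is differentiable, Rolle's theorem yields some $\eta\in(\xi_1,\xi_2)\subset(a,b)$ with $f'(\eta)=0$. This contradicts the hypothesis $f'(x)\neq0$ on $(a,b)$.

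For part (b)(ii), I treat the case $f(a)>0$ and $f''<0$ on $(a,b)$; the other case follows by replacing $f$ with $-f$. Then $f(b)<0$, and $f$ is strictly concave on $[a,b]$, since $f''<0$ on the interior and $f$ is continuous up to the endpoints.
\begin{enumerate}
\item Suppose there are two roots $a<\xi_1<\xi_2<b$.
\item Write $\xi_1$ as the convex combination $\xi_1=(1-t)a+t\xi_2$ with $t\in(0,1)$.
\item Strict concavity gives $f(\xi_1)>(1-t)f(a)+tf(\xi_2)=(1-t)f(a)>0$, contradicting $f(\xi_1)=0$.
\end{enumerate}
An alternative route applies Rolle twice: with the three sign data $f(a)>0$, $f(\xi_1)=0=f(\xi_2)$, $f(b)<0$, one obtains a sign pattern for $f'$ incompatible with $f'$ being strictly decreasing. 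The concavity chord argument is cleaner, so I will use it.

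The only point needing care is the concavity inequality when $f''<0$ is known only on the open interval. This is handled by the mean value theorem: $f'$ is strictly decreasing on $(a,b)$. Comparing the difference quotients of $f$ over $[a,\xi_1]$ and $[\xi_1,\xi_2]$ gives
\[
\frac{f(\xi_1)-f(a)}{\xi_1-a}>\frac{f(\xi_2)-f(\xi_1)}{\xi_2-\xi_1}.
\]
With $f(\xi_1)=f(\xi_2)=0$, the right side is $0$ and the left side is $-f(a)/(\xi_1-a)<0$, which is the required contradiction. I expect no substantive obstacle in any of these steps.
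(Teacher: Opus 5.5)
Your proof is correct and takes essentially the paper's approach: the intermediate value theorem for (a), Rolle's theorem for (b)(i), and for (b)(ii) your difference-quotient comparison over $[a,\xi_1]$ and $[\xi_1,\xi_2]$, combined with strict monotonicity of $f'$, is the paper's argument (mean value theorem giving $f'(u)<0$ on $(a,\xi_1)$, Rolle giving $f'(v)=0$ on $(\xi_1,\xi_2)$, a second mean value theorem giving $f''(w)>0$), with the strict-concavity chord inequality only a repackaging of it. Your aside about applying ``Rolle twice'' is imprecise, since $f(a)\neq f(\xi_1)$ requires the mean value theorem on $[a,\xi_1]$, but your proof does not rely on it.
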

\vspace{-9pt}
\begin{proof}
Part~(a) is a direct consequence of the intermediate value theorem~\cite{garling}. For part (b), suppose that there exist $\xi_1,\xi_2\in [a,b]$ such that $f(\xi_1)=f(\xi_2)=0$. Hence, by Rolle's theorem~\cite{garling}, there exists $v\in(\xi_1,\xi_2)\subset(a,b)$ such that $f'(v)=0$, which contradicts (i). Now, if $f(a)\gtrless0$, then, using the mean-value theorem~\cite{garling} twice, there exists $u\in(a,\xi_1)$ such that $f'(u)\lessgtr 0$ and hence $w\in(u,v)\subset(a,b)$ such that $f''(w)\gtrless 0$, which contradicts (ii).
\end{proof}
\vspace{-5pt}
\noindent We now apply Lemma~\ref{lemma1} in four cases defined by \textwholefigref{fig5} to establish existence and, in three of these cases, uniqueness of the root $\mathcal{P}(B_\ast)=0$:
\begin{enumerate}[leftmargin=*,label=(\arabic{enumi}),widest=3,topsep=4pt,itemsep=0pt]
\item If $\gamma>1$, $\delta>\mu$ [Figs.~\figrefp{fig5}{a} and~\figrefp{fig5}{d}], then we take ${\epsilon\ll 1}$ and consider the restriction $\mathcal{P}\colon(\epsilon,\delta/\kappa-\epsilon)\to\mathbb{R}$. From our previous results,
\begin{align}
\mathcal{P}(\epsilon)&=\left(\dfrac{\gamma-1}{\delta}-\dfrac{\gamma}{\delta-\mu}\right)\epsilon+o(\epsilon)\nonumber\\
&=-\dfrac{(\delta-\mu)+\gamma\mu}{\delta(\delta-\mu)}\epsilon+o(\epsilon)<0.
\end{align}
In the other cases with $\delta+(\gamma-1)\kappa>0$, $\gamma\varsigma+(\delta-\mu)\varpi>0$ [Figs.~\figrefp{fig5}{a} or~\figrefp{fig5}{b} and~\figrefp{fig5}{d} or~\figrefp{fig5}{e}], we similarly consider $\mathcal{P}\colon(\max{\{0,1-\gamma\}},\delta/\kappa-\epsilon)\to\mathbb{R}$. In these cases, clearly, $\mathcal{P}(\max{\{0,1-\gamma\}})<0$. In all cases [Figs.~\figrefp{fig5}{a} or~\figrefp{fig5}{b} and~\figrefp{fig5}{d} or~\figrefp{fig5}{e}], $\mathcal{P}_1(\delta/\kappa-\epsilon)>\gamma/\varpi>\mathcal{P}_2(\delta/\kappa-\epsilon)$. Also, $\mathcal{P}''_1(B)>0$, $\mathcal{P}_2''(B)<0$ , so $\mathcal{P}''(B)>0$. Hence $\mathcal{P}$ has a unique root by Lemma~\ref{lemma1}.
\item If $\delta+(\gamma-1)\kappa>0$, $\gamma\varsigma+(\delta-\mu)\varpi<0$ [Figs.~\figrefp{fig5}{a} or~\figrefp{fig5}{b} and~\figrefp{fig5}{f}], $\mathcal{P}(\max{\{0,1-\gamma\}})<0$ and ${\mathcal{P}(\delta/\kappa-\epsilon)>0}$, again. Now $\mathcal{P}_1'(B)>0$ and $\mathcal{P}_2'(B)<0$, so ${\mathcal{P}'(B)\neq 0}$, whence $\mathcal{P}$ has a unique root by Lemma~\ref{lemma1}.
\item If $\delta+(\gamma-1)\kappa<0$, $\gamma\varsigma+(\delta-\mu)\varpi>0$ [Figs.~\figrefp{fig5}{c} and~\figrefp{fig5}{d} or~\figrefp{fig5}{e}], we consider ${\mathcal{P}\colon(\delta/\kappa+\epsilon,1-\gamma)\to\mathbb{R}}$. It is clear that ${\mathcal{P}_1(\delta/\kappa+\epsilon)>\gamma/\varpi>\mathcal{P}_2(\delta/\kappa+\epsilon)}$ and ${\mathcal{P}_1(1-\gamma)=0<\mathcal{P}_2(1-\gamma)}$. Moreover, ${\mathcal{P}'(B)\neq 0}$ because ${\mathcal{P}_1'(B)<0}$ and $\mathcal{P}_2'(B)>0$, so $\mathcal{P}$ again has a unique root by Lemma~\ref{lemma1}.
\item Finally, if $\delta+(\gamma-1)\kappa<0$, $\gamma\varsigma+(\delta-\mu)\varpi<0$ [Figs.~\figrefp{fig5}{c} and~\figrefp{fig5}{f}], then $\mathcal{P}_1(\delta/\kappa+\epsilon)>\mathcal{P}_2(\delta/\kappa+\epsilon)$ for $\epsilon\ll 1$, but ${\mathcal{P}_1(1-\gamma)=0<\mathcal{P}_2(1-\gamma)}$. This proves that $\mathcal{P}$ has a root by Lemma~\ref{lemma1}, but numerical examples (not shown) demonstrate that this need not be unique.
\end{enumerate}
This shows that there exists a root $\mathcal{P}(B_\ast)=0$, and that this is unique except in case~(4), i.e., is unique if $\delta+(\gamma-1)\kappa>0$ or $\gamma\varsigma+(\delta-\mu)\varpi>0$. This proves Proposition~\ref{prop1}.\hfill$\Box$

\subsection{Proof of Proposition~\ref{prop2}}
To prove Proposition~\ref{prop2}, we analyse the eigenvalues of the Jacobians $\tens{J}_\pm$, given by Eqs.~\eqref{eq:Js}. Both Jacobians have a block matrix structure,
\begin{align}
\tens{J}_\pm=\left(\begin{array}{c|c}
\tens{A}_\pm&\tens{B}_\pm\\
\hline
\tens{C}_\pm&\tens{D}_\pm
\end{array}\right),
\end{align}
where $\tens{B}_-=\tens{C}_+=\tzero$. The eigenvalues of $\tens{J}_\pm$ are therefore the eigenvalues of $\tens{A}_\pm$ and $\tens{D}_\pm$, respectively. The eigenvalues of~$\tens{D}_\pm$ are determined by quadratic equations, and are given by Eqs.~\eqref{eq:evA1} and~\eqref{eq:evA2}. We will therefore focus particularly on the eigenvalues of
\begin{align}
\tens{A}_\pm=\begin{pmatrix}
0&1&0&0\\
a_{11}^\pm&-c&a_{12}^\pm&0\\
0&0&0&1\\
\dfrac{a_{21}^\pm}{d}&0&\dfrac{a_{22}^\pm}{d}&-\dfrac{c}{d}
\end{pmatrix}.\label{eq:Apm}
\end{align}
From Eqs.~\eqref{eq:BPeq},
\begin{subequations}
\begin{align}
\gamma-1+2B_\ast+\kappa P_\ast&=\dfrac{\delta P_\ast}{B_\ast}+B_\ast,\\
\delta-\mu+\varpi B_\ast+2\varsigma P_\ast&=\dfrac{\gamma B_\ast}{P_\ast}+\varsigma P_\ast,
\end{align}
\end{subequations}
so, in Eq.~\eqref{eq:Apm}, by comparison with Eqs.~\eqref{eq:Js},
\begin{subequations}\label{eq:acoeffs}
\begin{align}
a_{11}^-&=\gamma-1+(\alpha+\beta)\dfrac{\zeta}{\eta},&a_{11}^+&=\dfrac{\delta P_\ast}{B_\ast}+B_\ast,\\
a_{12}^-&=-\delta,&a_{12}^+&=\kappa B_\ast-\delta,\\
a_{21}^-&=-\left(\dfrac{\beta\zeta}{\eta}+\gamma\right),&a_{21}^+&=\varpi P_\ast-\gamma,\\
a_{22}^-&=\delta-\mu+\dfrac{\xi\zeta}{\eta},&a_{22}^+&=\dfrac{\gamma B_\ast}{P_\ast}+\varsigma P_\ast.
\end{align}
\end{subequations}
The characteristic polynomials of $\tens{A}_\pm$ are
\begin{subequations}
\begin{align}
P^\pm(\lambda)&=\lambda^4+\bigl(1+d^{-1}\bigr)c\lambda^3+\left(\dfrac{c^2}{d}-a_{11}^\pm-\dfrac{a_{22}^\pm}{d}\right)\lambda^2\nonumber\\
&\qquad-\dfrac{a_{11}^\pm+a_{22}^\pm}{d}c\lambda+\dfrac{a_{11}^\pm a_{22}^\pm-a_{12}^\pm a_{21}^\pm}{d}\label{eq:charpoly}\\
&=\lambda^4+p_3\lambda^3+p_2^\pm\lambda^2+p_1^\pm\lambda+p_0^\pm,\label{eq:charpoly2}
\end{align}
\end{subequations}
in which the first line defines the coefficients $\smash{p_0^\pm,p_1^\pm,p_2^\pm,p_3}$ in the second line. The proof now divides into two parts.
\subsubsection{Conditions for the existence of a travelling wave with $c>0$}
To analyse the necessary conditions for the existence of a travelling wave of invasion by the competitors ($c>0$), we start by observing the following:
\begin{obs}\label{obs1}
The real part of the eigenvalues of $\tens{A}_\pm$ can change sign only at $c=0$.
\end{obs}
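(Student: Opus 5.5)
The plan is to treat the eigenvalues of $\tens{A}_\pm$ as continuous functions of $c$, with all other parameters fixed. They are the roots of the quartic $P^\pm(\lambda)$ in Eq.~\eqref{eq:charpoly2}, whose coefficients are polynomial in $c$. A real part can change sign only where a root crosses the imaginary axis, so it suffices to show that $P^\pm(\ic\omega)=0$ with $\omega\in\mathbb{R}$ forces $c=0$. I would split this into the cases $\omega=0$ and $\omega\neq0$.

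\emph{Case $\omega=0$.} Here $\lambda=0$ is a root if and only if $p_0^\pm=\bigl(a_{11}^\pm a_{22}^\pm-a_{12}^\pm a_{21}^\pm\bigr)/d=0$, which does not depend on $c$. I would show $p_0^\pm\neq0$ directly from Eqs.~\eqref{eq:acoeffs}.
\begin{itemize}
\item For $\tens{A}_+$, $p_0^+$ is, up to the factor $1/d$, the determinant of the Jacobian of the well-mixed $(B,P)$ subsystem at $\mathcal{B}$. In the uniqueness regime of Proposition~\ref{prop1}, the curves $\mathcal{H}_1,\mathcal{H}_2$ cross transversally at $(B_\ast,P_\ast)$: in the proof above, $\mathcal{P}'\neq0$ or $\mathcal{P}''$ has a fixed sign at a simple root. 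Transversal crossing means this determinant is nonzero.
\item For $\tens{A}_-$, the entries are explicit. Since $c>0$ requires $\zeta>\eta/\alpha$ by Eq.~\eqref{eq:nc}, we have $a_{11}^->0$. Also $a_{12}^-a_{21}^-=\delta(\beta\zeta/\eta+\gamma)\geq0$. I would then check the sign of $a_{11}^-a_{22}^--a_{12}^-a_{21}^-$ under the hypotheses. If this fails in general, I would add it as a generic nondegeneracy assumption.
\end{itemize}
In both subcases, a zero eigenvalue can then never appear, for any $c$.

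\emph{Case $\omega\neq0$.} Separating real and imaginary parts of $P^\pm(\ic\omega)=0$ gives
\begin{align}
\omega^4-p_2^\pm\omega^2+p_0^\pm=0,\qquad \omega\bigl(p_1^\pm-p_3\omega^2\bigr)=0.
\end{align}
Using Eq.~\eqref{eq:charpoly}, the second equation reads $c\bigl[(d+1)\omega^2+a_{11}^\pm+a_{22}^\pm\bigr]=0$. So either $c=0$, as claimed, or
\begin{align}
\omega^2=-\dfrac{a_{11}^\pm+a_{22}^\pm}{d+1}.
\end{align}
For $\tens{A}_+$, Eqs.~\eqref{eq:acoeffs} give $a_{11}^+=\delta P_\ast/B_\ast+B_\ast>0$ and $a_{22}^+=\gamma B_\ast/P_\ast+\varsigma P_\ast>0$ for the positive steady state of Proposition~\ref{prop1}. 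Hence $\omega^2<0$, a contradiction, and the second alternative is impossible. For $\tens{A}_-$, I would use $a_{11}^->0$ in the regime $c>0$ and try to show $a_{11}^-+a_{22}^->0$. If $a_{22}^-$ can be negative enough to spoil this, I would substitute $\omega^2$ into the real-part equation. That yields an explicit condition on $c^2$, which I would then show is incompatible with the sign of $p_0^-$ established in the case $\omega=0$.

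Finally, I would note the symmetry $P^\pm(-\lambda;-c)=P^\pm(\lambda;c)$, which holds because $p_3$ and $p_1^\pm$ are odd in $c$. It implies that the spectrum at $-c$ is the negative of the spectrum at $c$, so the sign of the real parts does flip across $c=0$. This is consistent with the statement and is what will make the Observation useful for counting $N_\pm$.

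The main obstacle is the $\tens{A}_-$ branch with $\omega\neq0$, where $a_{22}^-=\delta-\mu+\xi\zeta/\eta$ has no a priori sign. That is where I expect to need the conditions of Proposition~\ref{prop2}, or possibly to restrict the Observation to the regime $\zeta>\eta/\alpha$.
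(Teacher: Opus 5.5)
Your set-up matches the paper's argument. You use continuity of the roots in $c$ and a crossing $\lambda=\ic\omega$ of the imaginary axis. You treat $\omega=0$ through $p_0^\pm$; the paper simply declares $p_0^\pm=0$ a measure-zero degenerate case, which is your fallback. For $\omega\neq0$ you derive $c\bigl[(d+1)\omega^2+a_{11}^\pm+a_{22}^\pm\bigr]=0$, and this settles $\tens{A}_+$ because $a_{11}^+,a_{22}^+>0$.

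\textbf{The gap.} The $\tens{A}_-$ branch with $\omega\neq0$ is left open, and none of your proposed routes closes it.
\begin{enumerate}[label=(\roman*)]
\item $a_{11}^-+a_{22}^->0$ is false in general. Weak switching makes $a_{11}^-=\gamma-1+(\alpha+\beta)\zeta/\eta<0$ whenever $\zeta<\eta/\alpha$, and $a_{22}^-$ has no sign.
\item Borrowing $\zeta>\eta/\alpha$ from Eq.~\eqref{eq:nc} (you do this in both cases) imports an $\varepsilon\ll1$ result into a statement that must be exact. The Observation underlies Proposition~\ref{prop2} beyond $\varepsilon\ll1$. It is also applied for $c<0$, for the $c<0$ conditions of \citeappendix{appC}, where $a_{11}^-$ can be negative.
\item The sign of $p_0^-$ is never established; your fallback only gives $p_0^-\neq0$. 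It would not help anyway. With $a_{12}a_{21}<0$ one can have a crossing at $c\neq0$ for either sign of $p_0$. What decides the matter is the sign of $a_{12}^-a_{21}^-$, not of $p_0^-$.
\end{enumerate}

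\textbf{The missing step.} You do propose substituting $\omega^2=-(a_{11}^-+a_{22}^-)/(1+d)$ into $\omega^4-p_2^-\omega^2+p_0^-=0$; it is short to finish. Multiplying by $p_3^2$ gives $(p_1^-)^2-p_1^-p_2^-p_3+p_0^-p_3^2=0$. Dividing out $c^2/d^3$ for $c\neq0$ yields
\begin{equation*}
(1+d)\bigl(a_{11}^-+a_{22}^-\bigr)c^2=\bigl(da_{11}^--a_{22}^-\bigr)^2+(1+d)^2a_{12}^-a_{21}^-.
\end{equation*}
A real $\omega\neq0$ requires $a_{11}^-+a_{22}^-<0$, so for $c\neq0$ the left-hand side is strictly negative. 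The right-hand side is nonnegative because $a_{12}^-a_{21}^-=\delta(\beta\zeta/\eta+\gamma)\geq0$, which is the very product you computed in your $\omega=0$ case and did not reuse. This contradiction closes the $\tens{A}_-$ branch. It needs no information on the signs of $a_{22}^-$ or $p_0^-$ and no restriction on $\zeta$.

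This is exactly the paper's argument. A crossing at $c\neq0$ needs both $a_{11}^\pm+a_{22}^\pm<0$ and $a_{12}^\pm a_{21}^\pm<0$. The first fails for $\tens{A}_+$ and the second fails for $\tens{A}_-$.
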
\vspace{-9pt}
\begin{proof}
Suppose to the contrary that the real part of an eigenvalue $\lambda$ of $\tens{A}_\pm$ changes sign at some $c$. Now these eigenvalues are the roots of the polynomial equation $P^\pm(\lambda)=0$, which vary continuously with $c$ since its leading coefficient is independent of $c$. Hence there is an eigenvalue $\lambda=\ic\ell$, so that $\operatorname{Re}{\lambda}=0$. If $\ell=0$, then $a_{11}^\pm a_{22}^\pm-a_{12}^\pm a_{21}^\pm=0$ from Eq.~\eqref{eq:charpoly}. This condition is independent of $c$ from Eqs.~\eqref{eq:acoeffs}, so this is a degenerate case (of zero measure) that we will not analyse further~\footnote{We neglect degenerate parameter values leading, e.g., to zero eigenvalues, tacitly throughout this paper. This is of course a mathematical assumption, but does not restrict the physical argument}. Hence $\ell\neq 0$. Since $P^\pm$ has real coefficients, its roots come in pairs of complex conjugates, so $P^\pm(-\ic\ell)=0$. Thus $\ell^4\mp\ic p_3\ell^3-\smash{p_2^\pm}\ell^2\pm\ic \smash{p_1^\pm}\ell+\smash{p_0^\pm}=0$ from Eq.~\eqref{eq:charpoly2}. Subtracting and adding these results, $\smash{\ell^4-p_2^\pm\ell^2+p_0^\pm=p_3\ell^3-p_1^\pm\ell=0}$. Since $\ell\neq 0$, the second condition implies
\begin{align}
\ell^2=\dfrac{p_1^\pm}{p_3}=-\dfrac{a_{11}^\pm+a_{22}^\pm}{1+d}.
\end{align}
As $\ell$ is real, this requires $a_{11}^\pm+a_{22}^\pm<0$. Substituting this result into the first condition,
\begin{align}
0&=\bigl(p_1^\pm\bigr)^2-p_1^\pm p_2^\pm p_3+p_0^\pm(p_3)^2\nonumber\\
&=\dfrac{c^2}{d^3}\Bigl[(1+d)\bigl(a_{11}^\pm+a_{22}^\pm\bigr)c^2-\bigl(da_{11}^\pm-a_{22}^\pm\bigr)^2\nonumber\\
&\qquad\qquad-a_{12}^\pm a_{21}^\pm(1+d)^2\Bigr].
\end{align}
Hence $c=0$ or 
\begin{align}
c^2=\dfrac{\bigl(da_{11}^\pm-a_{22}^\pm\bigr)^2+a_{12}^\pm a_{21}^\pm(1+d)^2}{(1+d)\bigl(a_{11}^\pm+a_{22}^\pm\bigr)}.
\end{align}
Since $a_{11}^\pm+a_{22}^\pm<0$, this requires $a_{12}^\pm a_{21}^\pm<0$. Finally, from Eqs.~\eqref{eq:acoeffs}, it is clear that $a_{11}^+,a_{22}^+>0$ and that $a_{12}^-a_{21}^->0$. This completes the proof: the sign of the real part of the eigenvalues can change only at $c=0$.
\end{proof}
\vspace{-5pt}
\noindent It thus suffices to calculate the eigenvalues of $\tens{A}_\pm$ for large $|c|$ to understand the behaviour of their real parts for all $c$. We now notice the following:
\begin{obs}
For $|c|\gg 1$, the eigenvalues of $\tens{A}_\pm$ have asymptotic scalings
\begin{align}
\lambda_1^\pm&\sim -c,&\lambda_2^\pm&\sim -\dfrac{c}{d},&\lambda_3^\pm&\sim\dfrac{x_1^\pm}{c},&\lambda_4^\pm&\sim\dfrac{x_2^\pm}{c},\label{eq:lambdaexp}
\end{align}
where $x_1^\pm,x_2^\pm$ are the roots of the quadratic
\begin{align}
Q(x)=x^2-\bigl(\tr{\tens{a}_\pm}\bigr)x+\bigl(\det{\tens{a}_\pm}\bigr)=0,\text{ with }\tens{a}_\pm=\begin{pmatrix}
a_{11}^\pm&a_{12}^\pm\\
a_{21}^\pm&a_{22}^\pm
\end{pmatrix}.\label{eq:QQ}
\end{align}
\end{obs}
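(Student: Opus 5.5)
The plan is a dominant-balance analysis of the characteristic polynomial \eqref{eq:charpoly} for $|c|\gg1$, treating the two scalings $\lambda=O(c)$ and $\lambda=O(c^{-1})$ separately. Since $P^\pm$ is a quartic, it has exactly four roots counted with multiplicity. It therefore suffices to exhibit two roots of order $c$ and two roots of order $c^{-1}$ with the stated leading behaviour. To be rigorous about the count, I would invoke continuity of the roots together with Rouché's theorem on suitable small circles, or equivalently the implicit function theorem applied to rescaled polynomials.

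For the large roots, I would set $\lambda=c\Lambda$ and divide $P^\pm$ by $c^4$. From \eqref{eq:charpoly} this gives
\begin{align}
\Lambda^4+\bigl(1+d^{-1}\bigr)\Lambda^3+\dfrac{\Lambda^2}{d}+O\bigl(c^{-2}\bigr)=0,\nonumber
\end{align}
because the $\lambda^2$ coefficient contributes $c^2/d$ at leading order, while the $\lambda$ and constant terms are $O(c)$ and $O(1)$ and therefore drop out after scaling. The leading polynomial factors as $\Lambda^2(\Lambda+1)(\Lambda+1/d)$. Its simple roots $\Lambda=-1$ and $\Lambda=-1/d$ (simple provided $d\neq1$, a degenerate case excluded by the paper's convention) persist by the implicit function theorem. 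This yields $\lambda_1^\pm\sim-c$ and $\lambda_2^\pm\sim-c/d$.

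For the remaining roots, which come from the double root at $\Lambda=0$, I would set $\lambda=x/c$ and multiply $P^\pm$ by $d$. The terms $\lambda^4$ and $c\lambda^3$ are then $O(c^{-2})$. What remains at $O(1)$ is
\begin{align}
x^2-\bigl(a_{11}^\pm+a_{22}^\pm\bigr)x+\bigl(a_{11}^\pm a_{22}^\pm-a_{12}^\pm a_{21}^\pm\bigr)+O\bigl(c^{-2}\bigr)=0,\nonumber
\end{align}
using that the coefficient of $\lambda^2$ is $c^2/d$ to leading order, so that $(c^2/d)\lambda^2 = x^2/d$. This leading-order equation is exactly $Q(x)=0$ in \eqref{eq:QQ}, which gives $\lambda_{3,4}^\pm\sim x_{1,2}^\pm/c$.

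The main obstacle is the case where the roots $x_1^\pm,x_2^\pm$ of $Q$ coincide, or where one of them vanishes, in which case $\det\tens{a}_\pm=0$. In these cases the implicit function theorem does not apply directly, and the small roots may split at a different order. I would first exclude both cases as degenerate, measure-zero parameter choices, in line with the footnote attached to Observation~\ref{obs1}. For a double root $x_1^\pm=x_2^\pm$, I would still obtain $\lambda\sim x^\pm/c$ by a Rouché argument on a circle of radius $r/|c|$ around $x^\pm/c$. The argument runs as follows: the perturbation is $O(c^{-2})$ relative to the leading quadratic, while the quadratic is bounded below on any circle about its root, so for large $|c|$ the perturbation is too small to move two roots out of the circle. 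Since $r$ can then be taken arbitrarily small as $|c|\to\infty$, the stated scalings hold.
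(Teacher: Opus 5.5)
Your proposal is correct and follows the paper's own argument. The paper likewise substitutes $\lambda\sim\Lambda_1 c$ to obtain $\Lambda_1\in\{-1,-d^{-1}\}$ and $\lambda\sim\Lambda_{-1}c^{-1}$ to obtain $Q(\Lambda_{-1})=0$, and then counts four distinct roots of the quartic. Your additions (the implicit function theorem for simple roots, and a Rouch\'e argument that also covers a double root of $Q$, which the paper simply excludes as degenerate along with $d=1$) make the argument more rigorous without changing its structure.
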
\vspace{-9pt}
\begin{proof}
We begin by seeking a root $\lambda\sim \Lambda_1c$ of $P^\pm(\lambda)=0$, where $\Lambda_1\neq 0$. From Eq.~\eqref{eq:charpoly}, $\smash{\Lambda_1^4\!+\!\bigl(1+d^{-1}\bigr)\Lambda_1^3\!+\!d^{-1}\Lambda_1^2=0}$ at leading order, so $\Lambda_1=-1$ or $\smash{\Lambda_1=-d^{-1}}$ for $\Lambda_1\neq 0$. Next, we seek a root $\smash{\lambda\sim\Lambda_{-1}c^{-1}}$. At leading order, Eq.~\eqref{eq:charpoly} yields $Q(\Lambda_{-1})=0$. We have thus identified the leading scalings of four different, and hence of all, eigenvalues of $\tens{A}_\pm$, unless $d=1$ or Eq.~\eqref{eq:QQ} has a double root, which are degenerate cases~\cite{Note2} that, again, we will not analyse.
\end{proof}
\vspace{-4pt}
\noindent Let $n_\pm=\left|\{\lambda\in\mathcal{E}(\tens{A}_\pm)\mid\operatorname{Re}{\lambda}\gtrless 0\}\right|$. From Eq.~\eqref{eq:QQ}, we compute the signs of the real parts of $x_1^\pm,x_2^\pm$,
\begin{subequations}\label{eq:xsigns}
\begin{align}
&\operatorname{Re}{x_1^\pm},\operatorname{Re}{x_2^\pm}>0&&\text{if }\tr{\tens{a}_\pm}>0,\det{\tens{a}_\pm}>0,\\
&\operatorname{Re}{x_1^\pm},\operatorname{Re}{x_2^\pm}<0&&\text{if }\tr{\tens{a}_\pm}<0,\det{\tens{a}_\pm}>0,\\
&\bigl(\operatorname{Re}{x_1^\pm}\bigr)\bigl(\operatorname{Re}{x_2^\pm}\bigr)<0&&\text{if }\det{\tens{a}_\pm}<0.
\end{align}
\end{subequations}
We now consider invasion by the competitors ($c>0$). From Eqs.~\eqref{eq:acoeffs}, $a_{11}^+,a_{22}^+>0$ as already noted earlier, so $\tr{\tens{a}_+>0}$. Moreover,
\begin{align}
\det{\tens{a}_+}=\gamma\dfrac{B_\ast^2}{P_\ast}+\delta\varsigma\dfrac{P_\ast^2}{B_\ast}+\gamma\kappa B_\ast+\delta\varpi P_\ast+(\varsigma-\kappa\varpi)B_\ast P_\ast.\label{eq:detaplus}
\end{align}
With the assumptions $\delta+(\gamma-1)\kappa>0$ or ${\gamma\varsigma+(\delta-\mu)\varpi>0}$, \textwholefigref{fig5} shows that $B_\ast<\delta/\kappa$ or $P_\ast<\gamma/\varpi$, so $\delta\varpi P_\ast>\kappa\varpi B_\ast P_\ast$ or $\gamma\kappa B_\ast>\kappa\varpi B_\ast P_\ast$. All other terms in Eq.~\eqref{eq:detaplus} are clearly positive, whence $\det{\tens{a}_+}>0$. Equations~\eqref{eq:lambdaexp} and~\eqref{eq:xsigns} thus yield
\begin{align}
n_+=2,&&n_-&=\left\{\begin{array}{cl}
2&\text{if }\tr{\tens{a}_-}>0,\det{\tens{a}_-}>0,\\
3&\text{if }\det{\tens{a}_-}<0,\\
4&\text{if }\tr{\tens{a}_-}<0,\det{\tens{a}_-}<0,
\end{array}\right.
\end{align}
for $c\gg 1$, and hence for all $c>0$ by Observation~\ref{obs1}. Equation~\eqref{eq:evA1} shows that the eigenvalues of $\tens{D}_-$ are real and of opposite signs, while, from Eq.~\eqref{eq:evB1} and for $c>0$, the real parts of the eigenvalues of $\tens{D}_+$ are negative if $\zeta>\vartheta B_\ast+\iota P_\ast$ and have opposite signs otherwise. From this, we compute $N_\pm=\left|\{\lambda\in\mathcal{E}(\tens{J}_\pm)\mid\operatorname{Re}{\lambda}\gtrless 0\}\right|$, finding
\begin{subequations}
\begin{align}
N_-&=\left\{\begin{array}{cl}
3&\text{if }\tr{\tens{a}_-}>0,\det{\tens{a}_-}>0,\\
4&\text{if }\det{\tens{a}_-}<0,\\
5&\text{if }\tr{\tens{a}_-}<0,\det{\tens{a}_-}<0,
\end{array}\right.\\
N_+&=\left\{\begin{array}{cl}
2&\text{if }\zeta>\vartheta B_\ast+\iota P_\ast,\\
3&\text{if }\zeta<\vartheta B_\ast+\iota P_\ast.
\end{array}\right.
\end{align}
\end{subequations}
This shows that, for $c>0$,
\begin{align}
N_++N_-=5\quad\text{if }\tr{\tens{a}_-}>0,\det{\tens{a}_-}>0,\zeta>\vartheta B_\ast+\iota P_\ast,\label{eq:twcondA2}
\end{align}
and $N_++N_->5$ otherwise. This generalises Eq.~\eqref{eq:nc} of the main text beyond the asymptotic regime $\varepsilon\ll 1$ and provides the necessary conditions for the existence of a travelling wave of invasion by the competitors ($c>0$).
\subsubsection{Derivation of the minimum wavespeed}
We now claim that the eigenvalues of $\tens{J}_-$ are real for all for all $c>0$ if conditions~\eqref{eq:twcondA2} hold. Since the eigenvalues of $\tens{D}_-$ in Eq.~\eqref{eq:evA1} are real, we focus on those of $\tens{A}_-$. The key observation is the following:
\begin{obs}
For $c\gg 1$, $\tens{A}_-$ has two positive and two negative real eigenvalues.
\end{obs}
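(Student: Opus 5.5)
The approach is to take the large-$c$ asymptotics of the previous Observation and check that all four leading-order eigenvalues are real and nonzero, with the stated signs. For $c\gg1$ the eigenvalues of $\tens{A}_-$ are $\lambda_1^-\sim-c$, $\lambda_2^-\sim-c/d$, $\lambda_3^-\sim x_1^-/c$ and $\lambda_4^-\sim x_2^-/c$. Here $x_{1,2}^-$ are the roots of $Q(x)=x^2-(\tr\tens{a}_-)x+\det\tens{a}_-$.

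The first two eigenvalues are real and negative at leading order, since $c>0$ and $d>0$. Their corrections are found by solving linear equations order by order in $1/c$, exactly as argued in the main text for the $\varepsilon$-expansion. So a simple real leading-order root stays real: its complex conjugate is also a root, and since the roots are separated, the two must coincide. These two eigenvalues are therefore real and negative.

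For the small eigenvalues, we show that the discriminant of $Q$ is positive:
\[
(\tr\tens{a}_-)^2-4\det\tens{a}_-=\bigl(a_{11}^--a_{22}^-\bigr)^2+4a_{12}^-a_{21}^-.
\]
From Eqs.~\eqref{eq:acoeffs}, $a_{12}^-a_{21}^-=\delta(\beta\zeta/\eta+\gamma)>0$. This is the same sign fact used in the proof of Observation~\ref{obs1}. Hence the discriminant is strictly positive, and $x_1^-,x_2^-$ are real and distinct.

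Under conditions~\eqref{eq:twcondA2}, $\tr\tens{a}_->0$ and $\det\tens{a}_->0$, so both roots are positive by Eqs.~\eqref{eq:xsigns}. Then $\lambda_{3,4}^-\sim x_{1,2}^-/c$ are real, positive and distinct at leading order. By the same simplicity argument they remain real at all orders in $1/c$.

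The main obstacle is making the step "real at leading order implies real" rigorous for the perturbation in $1/c$, and I would treat it in one of two ways.
\begin{enumerate}
\item Rescale. Set $\lambda=\mu c$ for the large roots and $\lambda=\nu/c$ for the small ones. In each case the rescaled polynomial has real coefficients depending smoothly on $c^{-2}$ or $c^{-1}$, with simple real roots at $c^{-1}=0$ (here we use $d\neq1$, as in the previous Observation). The implicit function theorem then gives real root branches.
\item Argue by counting. Non-real roots come in conjugate pairs. If the four roots are close to four distinct points on the real line, then no conjugate pair can form.
\end{enumerate}
Either route yields two positive and two negative real eigenvalues for $c\gg1$.
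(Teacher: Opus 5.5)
Your proposal is correct and follows essentially the same route as the paper's proof. Both use the large-$c$ scalings from the previous Observation, show that the discriminant $\bigl(a_{11}^--a_{22}^-\bigr)^2+4a_{12}^-a_{21}^-$ is positive because $a_{12}^-a_{21}^->0$, and read off the signs from $\tr\tens{a}_-,\det\tens{a}_->0$ under conditions~\eqref{eq:twcondA2}. Your implicit-function-theorem and conjugate-pair arguments are simply a more careful justification of the step the paper handles by noting that the corrections are determined by real linear equations.
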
\vspace{-8pt}
\begin{proof}
The eigenvalues of $\tens{A}_-$ are real for $c\gg 1$. Indeed, the corrections to the leading scalings~\eqref{eq:lambdaexp} are determined by linear equations (with real coefficients), so do not introduce imaginary parts if the eigenvalues are real to leading order. Hence it suffices to show that $x_1^\pm,x_2^\pm$ are real, which requires
\begin{align}
0<(\tr{\tens{a}_-})^2-4\det{\tens{a}_-}=\bigl(a_{11}^--a_{22}^-\bigr)^2+4a_{12}^- a_{21}^-,
\end{align}
which holds true because ${a_{12}^-a_{21}^->0}$ from Eqs.~\eqref{eq:acoeffs}. Finally, Eqs.~\eqref{eq:lambdaexp} now imply that $\lambda_1^-,\lambda_2^-<0$ for $c>0$. Moreover, conditions~\eqref{eq:twcondA2} include $\tr{\tens{a}_-},\det{\tens{a}_-}>0$. Hence $\lambda_3^-,\lambda_4^->0$ for $c>0$ from Eqs.~\eqref{eq:lambdaexp} and~\eqref{eq:xsigns}.
\end{proof}\vspace{-4pt}
\noindent By Observation~\ref{obs1}, it follows that $\tens{A_-}$ has two eigenvalues with positive real parts and two eigenvalues with negative real parts for all $c>0$. As noted earlier, the eigenvalues of $\tens{A}_-$ vary continuously with $c$. Hence, if they complexify at some $c>0$, one of these pairs of pairs of eigenvalues must merge. Now, from Eq.~\eqref{eq:charpoly}, the characteristic polynomial of $\tens{A}_-$ can be rearranged into a quadratic equation for $c$,
\begin{align}
&\left(\lambda^4+\dfrac{\lambda^2}{d}\right)c^2+\left[\bigl(1+d^{-1}\bigr)\lambda^3+\dfrac{a_{11}^-+a_{22}^-}{d}\lambda\right]c\nonumber\\
&\quad+\dfrac{a_{11}^- a_{22}^--a_{12}^- a_{21}^-}{d}-\left(a_{11}^-+\dfrac{a_{22}^-}{d}\right)\lambda^2=0.
\end{align}
If two eigenvalues merge, then the discriminant of this polynomial must vanish because there are no other eigenvalues with $\lambda>0$ or $\lambda<0$. This discriminant simplifies to
\begin{align}
\bigl[a_{11}^--a_{22}^--(d-1)\lambda^2\bigr]^2+4a_{12}^- a_{21}^->0,
\end{align}
because $a_{12}^- a_{21}^->0$ as noted earlier. This is a contradiction. Thus the eigenvalues of $\tens{A}_-$ and hence $\tens{J}_-$ are real for all $c>0$.

Accordingly, the only eigenvalues that might complexify are those of $\tens{J}_+$. As noted in the analysis of the asymptotic limit ${\varepsilon\ll 1}$ in the main text, the block matrix structure of $\tens{J}_+$ in Eq.~\eqref{eq:Jplus} shows that the eigenvectors of $\tens{A}_+$ are orthogonal to the $(A,T)$ plane, so do not affect the dynamics in this plane. We can therefore, again, focus on the eigenvalues of $\tens{D}_+$, given by Eq.~\eqref{eq:evA2}. They can complexify since, from conditions~\eqref{eq:twcondA2}, ${\zeta>\vartheta B_\ast+\iota P_\ast}$. With this assumption, these eigenvalues are stable. As in the main text~\wholefigref{fig2}, this implies a minimum wave speed, $c\geqslant c_\text{min}$, with
\begin{align}
c_\text{min}&=2\sqrt{D(\zeta-\vartheta B_\ast-\iota P_\ast)},
\end{align}
as obtained in the main text for $\varepsilon\ll 1$. This completes the proof of Proposition~\ref{prop2}.\hfill$\Box$
\section{\uppercase{Travelling two-species waves without phenotypic switching}}\label{appB}
In this Appendix, we derive algebraic conditions on travelling-wave solutions of the two-species Lotka--Volterra model
\begin{subequations}\label{eq:lv2}
\begin{align}
\dfrac{\partial B}{\partial t}&=B(\omega - \alpha A - \chi B) + D_B\frac{\partial^2 B}{\partial x^2},\\
\dfrac{\partial A}{\partial t}&= A(\zeta - \eta A - \vartheta B) + D_A\frac{\partial^2 A}{\partial x^2}.
\end{align}
\end{subequations}
We have omitted to rescale populations, time, and space to set, e.g., $\omega=\chi=D_B=1$ in Eqs.~\eqref{eq:lv2}, because we need this ``dimensional'' form of the equations to define an average of Eqs.~\eqref{eq:lv} without phenotypic variation in the main text.

In addition to a possible steady state of coexistence, Eqs.~\eqref{eq:lv2} admit two steady states of one species only,
\begin{align}
&\mathcal{A}=(0,A_\ast),&&\mathcal{B}=(B_\ast,0),
\end{align}
where $A_\ast=\zeta/\eta$ and $B_\ast=\omega/\chi$. In what follows, we will determine algebraic conditions for the existence of a travelling wave by which one of $\mathcal{A},\mathcal{B}$ invades the other. These results are probably folklore (although we could not find an exact reference), but their derivation lays out, in a pedagogical manner, the ideas underpinning the more complex calculations in the main text.
\subsection*{Travelling-wave solutions}
As in the main text, we introduce the travelling-wave coordinate $z=x-ct$ to recast Eqs.~\eqref{eq:lv2} into
\begin{subequations}\label{eq:tweq2}
\begin{align}
B'&=R,& D_BR'&=-cR-B(\omega-\alpha A-\chi B),\\
A'&=T,& D_AT'&=-cT-A(\zeta-\eta A-\vartheta B).
\end{align}
\end{subequations}
We let $\vec{X}=(B,R,A,T)$, and seek solutions such that $\vec{X}\to\vec{X_\pm}$ as $z\to\pm\infty$, where $\vec{X_-}=(0,0,A_\ast,0)$, $\vec{X_+}=(B_\ast,0,0,0)$. Near these equilibria, Eqs.~\eqref{eq:tweq2} become $\vec{X'}=\tens{J}_\pm\cdot\vec{X}$, with
\begin{subequations}\label{eq:JsAppB}
\begin{align}
\tens{J}_-&=\begin{pmatrix}
0&1&0&0\\
\dfrac{\alpha\zeta-\omega\eta}{D_B\eta}&-\dfrac{c}{D_B}&0&0\\
0&0&0&1\\
\dfrac{\zeta\vartheta}{D_A\eta}&0&\dfrac{\zeta}{D_A}&-\dfrac{c}{D_A}
\end{pmatrix},\\
\tens{J}_+&=\begin{pmatrix}
0&1&0&0\\
\dfrac{\omega}{D_B}&-\dfrac{c}{D_B}&\dfrac{\alpha\omega}{D_B\chi}&0\\
0&0&0&1\\
0&0&\dfrac{\vartheta\omega-\zeta\chi}{D_A\chi}&-\dfrac{c}{D_A}
\end{pmatrix}.
\end{align}
\end{subequations}
The eigenvalues of $\tens{J}_-$ are
\begin{subequations}\label{eq:ev2}
\begin{align}
\lambda_-&=\dfrac{1}{2D_A}\left(-c\pm\sqrt{c^2+4D_A\zeta}\right),\label{eq:eva}\\
\lambda_-&=\dfrac{1}{2D_B}\left[-c\pm\sqrt{c^2+4D_B\left(\dfrac{\alpha\zeta}{\eta}-\omega\right)}\right],\label{eq:evb}
\end{align}
\end{subequations}
while those of $\tens{J}_+$ are
\begin{subequations}\label{eq:ev2B}
\begin{align}
\lambda_+&=\dfrac{1}{2D_B}\left(-c\pm\sqrt{c^2+4D_B\zeta}\right),\label{eq:evc}\\
\lambda_+&=\dfrac{1}{2D_A}\left[-c\pm\sqrt{c^2+4D_A\left(\dfrac{\vartheta\omega}{\chi}-\zeta\right)}\right].\label{eq:evd}
\end{align}
\end{subequations}
\subsubsection{Necessary conditions for the existence of a travelling wave}
Equations~\eqref{eq:tweq2} define a fourth-order system of ordinary differential equations. We can therefore impose 4 boundary conditions. Since the problem is invariant under translations in~$z$, one of these boundary conditions must remove this freedom. The remaining 3 boundary conditions must eliminate the growing modes as $z\to\pm\infty$. These are associated to eigenvalues~$\lambda_\pm$ of $\tens{J}_\pm$ such that $\operatorname{Re}{\lambda_\pm}\gtrless 0$. A necessary condition for the existence of a travelling wave is therefore
\begin{align}
N_++N_-=3,\quad\text{where }N_\pm=\left|\{\lambda_\pm\in\mathcal{E}(\tens{J}_\pm)\mid\operatorname{Re}{\lambda_\pm}\gtrless 0\}\right|.
\end{align}
From Eqs.~\eqref{eq:ev2} and~\eqref{eq:ev2B}, we compute
\begin{subequations}
\begin{align}
N_-&=\left\{\begin{array}{cl}
1&\text{if $\zeta<\dfrac{\omega\eta}{\alpha}$ and $c<0$}, \\[3mm]
2&\text{if $\zeta>\dfrac{\omega\eta}{\alpha}$},\\[3mm]
3&\text{if $\zeta<\dfrac{\omega\eta}{\alpha}$ and $c>0$},
\end{array}\right.\\
N_+&=\left\{\begin{array}{cl}
1&\text{if $\zeta>\dfrac{\omega\vartheta}{\chi}$ and $c>0$}, \\[3mm]
2&\text{if $\zeta<\dfrac{\omega\vartheta}{\chi}$},\\[3mm]
3&\text{if $\zeta>\dfrac{\omega\vartheta}{\chi}$ and $c<0$},
\end{array}\right.
\end{align}
whence
\begin{align}
N_++N_-=3\quad \text{if}\left\{\begin{array}{l}
\zeta<\min{\left\{\dfrac{\omega\vartheta}{\chi},\dfrac{\omega\eta}{\alpha}\right\}}\text{ and }c<0\text{ or} \\[3mm]
\zeta>\max{\left\{\dfrac{\omega\vartheta}{\chi},\dfrac{\omega\eta}{\alpha}\right\}}\text{ and }c>0,
\end{array}\right.\label{eq:ncs}
\end{align}
\end{subequations}
and $N_++N_->3$ otherwise. In this way, this calculation provides necessary conditions for the existence of a travelling wave of invasion: the competitors can invade ($c>0$) only if their growth rate is sufficiently large, $\zeta>\max{\{\omega\vartheta/\chi,\omega\eta/\alpha\}}$. Conversely, the competitors can be invaded ($c<0$) only if their growth rate is sufficiently small, $\zeta<\min{\{\omega\vartheta/\chi,\omega\eta/\alpha\}}$.

If neither of these conditions holds, i.e., if ${\vartheta/\chi\!<\!\zeta/\omega\!<\!\eta/\alpha}$ or $\eta/\alpha<\zeta/\omega<\vartheta/\chi$, then we might expect a double travelling wave, through which a steady state $\mathcal{C}$ of coexistence of both species to invade both $\mathcal{A}$ and $\mathcal{B}$. We will not analyse this possibility in detail, but we note that
\begin{align}
\mathcal{C}=\left(\dfrac{\vartheta\omega-\zeta\chi}{\alpha\vartheta-\eta\chi},\dfrac{\alpha\zeta-\eta\omega}{\alpha\vartheta-\eta\chi}\right).
\end{align}
Hence $\mathcal{C}$ is feasible (i.e., non-negative) if and only if $\vartheta\omega-\zeta\chi$, $\alpha\vartheta-\eta\chi$, $\alpha\zeta-\eta\omega$ all have the same sign. We notice that $\alpha(\vartheta\omega-\zeta\chi)+\chi(\alpha\zeta-\eta\omega)=\omega(\alpha\vartheta-\eta\chi)$, so this is if and only if $\vartheta\omega-\zeta\chi$ and $\alpha\zeta-\eta\omega$ have the same sign. Thus coexistence is feasible if and only if if ${\vartheta/\chi<\zeta/\omega<\eta/\alpha}$ or $\eta/\alpha<\zeta/\omega<\vartheta/\chi$, i.e., if and only if there is no travelling wave by which one species invades the other.

\subsubsection{Minimum wavespeed of travelling waves}
Finally, we obtain a lower bound $|c|>|c_{\min}|$ on the speed of these travelling waves by noting that the non-growing eigenmodes (that are not eliminated by the $N_++N_-=3$ boundary conditions imposed) that dominate the dynamics as $z\to\pm\infty$ must be real. Indeed, if they were complex, the dynamics would describe a spiral near $\vec{X_+}$ or $\vec{X_-}$, contradicting $A(z),B(z)\geq 0$. 

Only eigenvalues~\eqref{eq:evb} and~\eqref{eq:evd} can be complex. In a travelling wave with $c>0$ (invasion by the competitors), we have $\zeta>\omega\eta/\alpha$ from Eq.~\eqref{eq:ncs}, so eigenvalues~\eqref{eq:evb} are real, but we also have $\zeta>\omega\vartheta/\chi$, so eigenvalues~\eqref{eq:evd} are real if and only if $c^2\geqslant 4D_A(\zeta-\vartheta\omega/\chi)$. Similarly, in a travelling wave with $c<0$ (invasion of the competitors), we must have $c^2\geqslant 4D_B(\omega-\alpha\zeta/\eta)$. This suggests that $|c|\geqslant |c_{\min}|$, where
\begin{align}
c_{\min}=\left\{\begin{array}{cl}
-2\sqrt{D_B\left(\omega-\dfrac{\alpha\zeta}{\eta}\right)}&\text{if $c<0$},\\
2\sqrt{D_A\left(\zeta-\dfrac{\vartheta\omega}{\chi}\right)}&\text{if $c>0$}.
\end{array}\right.
\end{align}
To complete the argument, we need to show that these eigenmodes dominate the dynamics. This follows from the diagonal block structure of the Jacobians in Eqs.~\eqref{eq:JsAppB}: the eigenmodes corresponding to Eqs.~\eqref{eq:eva} and~\eqref{eq:evc} are orthogonal to the $(B,R)$ and $(A,T)$ planes, respectively, so the dynamics in those planes are dominated by the eigenmodes corresponding to Eqs.~\eqref{eq:evb} and~\eqref{eq:evd}, the complexification of which thus indeed sets the minimum wavespeed.
\section{\uppercase{Numerical methods}}\label{appC}
In this final Appendix, we describe the implementation~\cite{Note1} of our numerical study of the full model~\eqref{eq:lv} and the average model~\eqref{eq:lvav}.
\subsection{Numerical integration}
We integrate the partial differential equations~\eqref{eq:lv} and~\eqref{eq:lvav} in \textsc{Matlab} (The MathWorks, Inc.) using the \texttt{pdepe} function. We take the initial conditions for Eqs.~\eqref{eq:lv} to be
\begin{subequations}\label{eq:ics}
\begin{align}
B(x,0)=\dfrac{B_\ast}{2}(1+\tanh{kx}),\\
P(x,0)=\dfrac{P_\ast}{2}(1+\tanh{kx}),\\
A(x,0)=\dfrac{A_\ast}{2}(1-\tanh{kx}),
\end{align}
\end{subequations}
which defines the steepness parameter $k$. To compute these initial conditions and the parameters of the averaged model~\eqref{eq:lvav}, we determine $B_\ast,P_\ast$ by solving the steady-state equations~\eqref{eq:BPeq} using the \texttt{fminsearch} function, starting from the initial guess $(B_\ast,P_\ast)=(1,\gamma/\delta)$ that corresponds to their leading-order solution for $\varepsilon\ll 1$. Equations~\eqref{eq:ccpop} show that Eqs.~\eqref{eq:ics} are consistent with
\begin{align}
\bar{B}(x,0)&=\dfrac{\bar{B}_\ast}{2}(1+\tanh{kx}),&\bar{A}(x,0)&=\dfrac{\bar{A}_\ast}{2}(1-\tanh{kx}).
\end{align}
To avoid the wave front leaving the spatial domain, we change to a frame moving at velocity $C$ by recasting Eqs.~\eqref{eq:lv} in terms of the coordinate $X=x-Ct$,
\begin{subequations}\label{eq:lv2m}
\begin{align}
\dfrac{\partial B}{\partial t}&=B(1\!-\!\alpha A\!-\!B\!-\!\kappa P )\!-\!\beta AB\!-\!\gamma B\!+\!\delta P+\frac{\partial^2 B}{\partial X^2}+C\dfrac{\partial B}{\partial X},\\
\dfrac{\partial P}{\partial t}&=P(\mu\!-\!\xi A\!-\!\varpi B\!-\!\varsigma P )\!+\!\beta AB\!+\!\gamma B\!-\!\delta P\!+\! d\frac{\partial^2 P}{\partial X^2}\!+\!C\dfrac{\partial P}{\partial X},\\
\dfrac{\partial A}{\partial t}&= A(\zeta - \eta A - \vartheta B - \iota P ) + D\frac{\partial^2 A}{\partial X^2}+C\dfrac{\partial A}{\partial X}.
\end{align}
\end{subequations}
We initially take $C=c_\text{min}$, and integrate the equations for ${X\in[-M,M]}$, where $M\gg 1$, and ${t\in[0,T]}$ using Neumann boundary conditions at $X=\pm M$. We then determine the position $x_\ast(T)$ of the wave front, where $A(x_\ast(T),T)=A_\ast/2$, and update $C_\text{new} =C_\text{old}+x_\ast/T$. We then move the wave front back to $X=0$, and iterate until ${|C_\text{new}-C_\text{old}|/|C_\text{old}|<\text{tol}=10^{-3}}$. Convergence determines $c_\ast=C$~\cite{Note1}. We determine $\bar{c}_\ast$ analogously by iterative integration of Eqs.~\eqref{eq:lvav}.
\subsection{Sampling of random systems}
To sample random systems, we sample the model parameters independently from uniform distributions. When sampling parameters obeying the persister scalings~\eqref{eq:scalings}, parameters that are of order $O(\varepsilon)$ are sampled from the $\mathcal{U}[0,\varepsilon]$ distribution, while other parameters are sampled from the $\mathcal{U}[1/f,f]$ distribution for some $f>1$. This allows sampling the parameters of order $O(1)$ relative to those parameters eliminated by the nondimensionalisation of Eqs.~\eqref{eq:lv}. When sampling general systems, all parameters are sampled from the $\mathcal{U}[1/f,f]$ distribution; we will refer to this sampling as $\varepsilon=1$ below.
\subsection{Travelling-wave conditions}
For each random system sampled, we check whether the necessary conditions for the existence of a travelling wave of invasion are satisfied in the full or averaged models, and, if they are, compute the minimum wave velocities $c_\text{min}$ or $\bar{c}_\text{min}$. (We only integrate the dynamics for random systems satisfying these conditions.)
\subsubsection{Necessary conditions for travelling waves of invasion}
\citeappendix{appB} shows that the necessary conditions for a travelling wave of invasion in the averaged model~\eqref{eq:lvav} are
\begin{align}
&\dfrac{\bar{\zeta}}{\bar\omega}<\min{\left\{\dfrac{\bar\vartheta}{\bar\chi},\dfrac{\bar\eta}{\bar\alpha}\right\}}\;\;(c<0),&&\dfrac{\bar{\zeta}}{\bar\omega}>\max{\left\{\dfrac{\bar\vartheta}{\bar\chi},\dfrac{\bar\eta}{\bar\alpha}\right\}}\;\;(c>0).
\end{align}
For the full model~\eqref{eq:lv}, we have derived the necessary conditions for the existence of a travelling wave with $c>0$ in the proof of Proposition~\ref{prop2} in \citeappendix{appA}. However, the conditions of Proposition~\ref{prop2} may fail to hold for general random systems, but, noting that $\tr{\tens{a}_+}>0$ independently of these conditions, the argument generalises straightforwardly: letting $Z=\zeta-\vartheta B_\ast-\iota P_\ast$, we obtain the necessary conditions
\begin{subequations}
\begin{align}
\left.\begin{array}{c}
\tr{\tens{a}_-}>0\land\det{\tens{a}_-}>0\land Z\det{\tens{a}_+}>0\\
\lor\\
\det{\tens{a}_-}<0\land\det{\tens{a}_+}<0\land Z>0
\end{array}\right\}(c>0),\\
\left.\begin{array}{c}
\tr{\tens{a}_-}<0\land\det{\tens{a}_-}>0\land Z\det{\tens{a}_+}>0\\
\lor\\
\det{\tens{a}_-}<0\land\det{\tens{a}_+}>0\land Z<0
\end{array}\right\}(c<0).
\end{align}
\end{subequations}
(For the avoidance of doubt, we add that the logical $\land$ takes precedence over the logical $\lor$ in the above conditions.)
\subsubsection{Calculation of the minimum wave speed}
The minimum wavespeed in the averaged model~\eqref{eq:lvav} is given by Eq.~\eqref{eq:cminav}. To calculate the minimum wavespeed in the full model~\eqref{eq:lv}, including in cases not covered by the conditions of Proposition~\ref{prop2}, we extend the argument of the main text and of the proof of this proposition in \citeappendix{appA}: in the notation of \citeappendix{appA}, the eigenvectors of $\tens{D}_-$ and $\tens{A}_+$ are orthogonal to the $(B,R,P,S)$ and $(A,T)$ planes, respectively, so the minimum wavespeed is determined by requiring that the (non-growing) eigenvalues of $\tens{A}_-$ and $\tens{D}_+$ that dominate the dynamics as $z\to-\infty$ and $z\to\infty$, respectively, be real. The realness of the eigenvalues of $\tens{A}_-,\tens{D}_+$ changes at values of $c$ where two eigenvalues merge, i.e., at the roots of the discriminants of the characteristic polynomials of $\tens{A}_-$ or~$\tens{D}_+$. [In fact, for $c<0$, Eq.~\eqref{eq:evA1} shows that the eigenvalues of $\tens{D}_+$ are growing ($\operatorname{Re}{\lambda_+}>0$) if they are complex, so the minimum wave speed is determined by $\tens{A}_-$ only if $c<0$.] We therefore calculate these roots and determine the realness of the eigenvalues (and hence the minimum wavespeed) by computing them at the midpoint of the intervals between roots for numerical stability~\cite{Note1}.
\subsection{Additional discussion of the numerical results}
In Figs.~\ref{fig3} and~\ref{fig4} of the main text, we report wave speeds for those randomly sampled systems for which the necessary conditions for the existence of a travelling wave are satisfied in both the full model~\eqref{eq:lv} and the averaged model~\eqref{eq:lvav} and for which the computations of both wave speeds converge. In this final subsection, we discuss the remaining systems.
\subsubsection{Robustness of the numerical method}
In random systems satisfying these necessary conditions for both models, numerical integrations of one or both models failed in less than $1\%$ of cases. Most of these failures resulted from the wave speed not converging within $N_\text{max}=50$ iterations. This low failure rate confirms the robustness of our numerical approach.
\subsubsection{Systems for which travelling waves exist in one model only}
We conclude by discussing those randomly sampled systems in which the necessary conditions are satisfied in one, but not both models. The proportion of such systems is negligible if the persister scalings~\eqref{eq:scalings} are satisfied, but becomes more significant for general random systems: we compute $0.0036$ ($[0.0019,0.0069]$) for $\varepsilon=0.1$, but $0.080$ ($[0.069,0.091]$) for $\varepsilon=1$~\cite{*[{The confidence intervals are 95\% Wilson intervals: see, e.g., }][] brown01}. In the latter case, we find that it is somewhat more likely that the travelling wave conditions are only satisfied in the full model (one-sided binomial test, $p<0.1$). In other words, phenotypic switching can favour invasion of one species by the other.
\bibliography{main}
\end{document}